\apptocmd{\thebibliography}{\raggedright}{}{}
\newcommand{\eps}{\epsilon}
\newcommand{\astralbody}{view}
\newcommand\dimcl[3]{(#1,#2,#3)}
\newcommand\pnt[1]{\mathbf{#1}}
\DeclarePairedDelimiter{\ceil}{\lceil}{\rceil}
\renewcommand{\O}{\mathcal{O}}
\newcommand\smallO{
  \mathchoice
    {{\scriptstyle\mathcal{O}}}
    {{\scriptstyle\mathcal{O}}}
    {{\scriptscriptstyle\mathcal{O}}}
    {\scalebox{.7}{$\scriptscriptstyle\mathcal{O}$}}
}
\title{Path and Ancestor Queries over Trees with Multidimensional Weight Vectors}
\titlerunning{Path and Ancestor Queries over Trees}
\author{Meng He}{Faculty of Computer Science, Dalhousie University, Canada}{mhe@cs.dal.ca}{}{}
\author{Serikzhan Kazi}{Faculty of Computer Science, Dalhousie University, Canada}{skazi@dal.ca}{}{}
    \authorrunning{M., He and S., Kazi}
\begin{document}
\maketitle 
\begin{abstract}
    We consider an ordinal tree $T$ on $n$ nodes, with each node assigned a 
    $d$-dimensional weight vector $\pnt{w} \in \{1,2,\ldots,n\}^d,$ where $d \in \mathbb{N}$ is a constant.
	We study path queries as generalizations of well-known {\textit{orthogonal range queries}},
	with one of the dimensions being tree topology rather than a linear order.
    Since in our definitions $d$ only represents the number of dimensions of
    the weight vector without taking the tree topology into account,
    a path query in a tree with $d$-dimensional weight vectors generalize
    the corresponding $(d+1)$-dimensional orthogonal range query.
	We solve {\textit{ancestor dominance reporting}} problem as a direct generalization
	of dominance reporting problem, 
    in time $\O(\lg^{d-1}{n}+k)$ 
    and space of $\O(n\lg^{d-2}n)$ words,
	where $k$ is the size of the output, for $d \geq 2.$
    We also achieve a tradeoff of $\O(n\lg^{d-2+\eps}{n})$ words of space, with
    query time of $\O((\lg^{d-1} n)/(\lg\lg n)^{d-2}+k),$ for the same problem, when $d \geq 3.$
    We solve {\textit{path successor problem}}
	in $\O(n\lg^{d-1}{n})$
	words of space and time
	$\O(\lg^{d-1+\eps}{n})$ for $d \geq 1$ and an arbitrary constant $\eps > 0.$ 
 	We propose a solution to {\textit{path counting problem}},
	with $\O(n(\lg{n}/\lg\lg{n})^{d-1})$ words of space and $\O((\lg{n}/\lg\lg{n})^{d})$ query time, 
    for $d \geq 1.$
 	Finally, we solve {\textit{path reporting problem}} in $\O(n\lg^{d-1+\eps}{n})$ words of space
    and $\O((\lg^{d-1}{n})/(\lg\lg{n})^{d-2}+k)$ query time, for $d \geq 2.$
    These results match or nearly match the best tradeoffs of the respective range queries.
    We are also the first to solve path successor even for $d = 1$. 
    \keywords{path queries \and range queries \and algorithms \and data structures \and theory}
\end{abstract}


\section {Introduction}\label{section:introduction}

    The problem of preprocessing a weighted
	tree, i.e., a tree in which each node is associated with a weight value, to support various queries 
	evaluating a certain function on the node weights
	of a given path, has been extensively studied
	~\cite{Alon87optimalpreprocessing,Chazelle1987,Hagerup:2000:PPP:333896.333905,DBLP:journals/njc/KrizancMS05,DBLP:journals/algorithmica/DemaineLW14,He:2016:DSP:2983296.2905368,DBLP:journals/algorithmica/ChanHMZ17}.
    For example, in {\textit{path counting}} (resp. {\textit{path reporting}}),
	the nodes of the given path with weights lying in the given query interval are counted (resp. reported).
    These queries address the needs of fast information retrieval from tree-structured data such as XML and tree network topology. 

    For many applications, meanwhile, a node in a tree is associated with not just a single weight, but rather with a vector of weights.
    Consider a simple scenario of an online forum thread,
    where users can rate responses and respond to posts.
    Induced is 
    a tree-shaped structure with posts representing nodes,
    and replies to a post being its children.
    One can imagine enumerating all the ancestor posts
    of a given post that are not too short and
    have sufficiently high average ratings. Ancestor dominance query,
    which is among the problems we consider, provides an appropriate
    model in this case.

    We define a $d$-{\textit{dimensional weight vector}} $\pnt{w} = (w_1,w_2,\ldots,w_d)$
    to be a vector with $d$ components, each in rank space $[n],$
    \footnote{Throughout this paper, $[n]$ stands
    for $\{1,2,\ldots,n\}$ for any positive integer $n$}
    i.e. $\pnt{w} \in [n]^d,$
    with $w_i$ being referred to as the $i$th weight of $\pnt{w}.$
    We then consider an ordinal tree $T$ on $n$ nodes, 
    each node $x$ of which is assigned a $d$-dimensional weight vector $\pnt{w}(x).$  
    The queries we will define all give a $d$-dimensional orthogonal range 
    $Q = \prod_{i=1}^{d}[q_i,q'_i],$
    and a weight vector $\pnt{w}$ is in $Q$ {\textit{iff}} for any $i \in [1,d],\,q_i \leq w_i \leq q'_i$ holds.
	In our queries, then,
    we are given a pair of vertices $x,y \in T,$ and an arbitrary orthogonal range $Q.$
    With $P_{x,y}$ being the path from $x$ to $y$ in the tree $T,$
	the goal is to preprocess the tree $T$ for the following
	types of queries:
	\begin{itemize}
        \item {\emph{Path Counting}}: return $|\{z \in P_{x,y}\,|\,\pnt{w}(z) \in {Q}\}|$. 
        \item {\emph{Path Reporting}}: enumerate $\{z \in P_{x,y}\,|\,\pnt{w}(z) \in {Q}\}$.
        \item {\emph{Path Successor}}: return $\mathtt{argmin}\{w_1(z)\,|\,z \in P_{x,y}\,\text{and}\,\pnt{w}(z) \in Q\}$.\footnote{For path successor, 
    we assume that $q'_1 = \infty$;
    if not, we need only check whether the $1$st weight of the returned node is at most $q'_1$.}
        \item {\emph{Ancestor Dominance Reporting}}: a special case of path reporting, in which $y$ is the root of the tree and $q_i' = +\infty$ for all $i \in [d].$
            That is, the query reports the ancestors of $x$ whose weight vectors dominate the vector $\pnt{q} = (q_1, q_2, \ldots, q_d)$. 
	\end{itemize}
	This indeed is a natural generalization of the traditional
    weighted tree, which we refer to as ``scalarly-weighted'',
    to the case when the weights 
    are multidimensional vectors. At the same time,
    when the tree degenerates into a single path, 
    these queries become respectively $(d+1)$-dimensional orthogonal range counting, reporting, successor, 
	as well as $(d+1)$-dimensional dominance reporting queries.
    Thus, the queries we study are generalizations of these fundamental geometric queries in high dimensions.
    We also go along with the state-or-art in orthogonal range search by considering weights in rank space,
    since the case in which weights are from a larger universe can be reduced to it~\cite{Gabow:1984:SRT:800057.808675}.

	\subsection{Previous Work}
    \subparagraph*{Path Queries in Weighted Trees.}
    For scalarly-weighted trees, Chazelle~\cite{Chazelle1987}
    gave an $\O(n)$-word {\textit{emulation dag}}-based data structure
    that answers path counting queries in $\O(\lg{n})$ time;\footnote{$\lg{x}$ denotes $\log_2{x}$ in this paper.}
    it works primarily with topology of the tree and is thus oblivious to the distribution of weights.
	Later, He et al.~\cite{He:2016:DSP:2983296.2905368} proposed
    a solution with $nH(W_T)+\smallO(n\lg\sigma)$
    bits of space and $\O(\frac{\lg{\sigma}}{\lg\lg{n}}+1)$ query time,
    when the weights are from $[\sigma];$
    here, $H(W_T)$ is the entropy of the multiset of the weights in $T.$
    When $\sigma \ll n,$ this matters.

    He et al.~\cite{He:2016:DSP:2983296.2905368} introduced and solved path reporting problem
	using linear space and $\O((1+k)\lg\sigma)$ query time, and $\O(n\lg\lg\sigma)$ words of space
	but $\O(\lg\sigma+k\lg\lg\sigma)$ query time, in the word-RAM model;
    henceforth we reserve $k$ for the size of the output.
	Patil et al.~\cite{DBLP:journals/jda/PatilST12}
	presented a succinct data structure for path reporting 
    with $n\lg\sigma+6n+\smallO(n\lg\sigma)$ bits of space
    and $\O((\lg{n}+k)\lg\sigma)$ query time. 
	An optimal-space solution with $nH(W_T)+\smallO(n\lg\sigma)$ bits of space
    and $\O((1+k)(\frac{\lg\sigma}{\lg\log{n}}+1))$ reporting time
    is due to He et al.~\cite{He:2016:DSP:2983296.2905368}.
    One of the tradeoffs proposed by Chan et al.~\cite{DBLP:journals/algorithmica/ChanHMZ17},
    requires $\O(n\lg^{\eps}{n})$ words of space 
    for the query time of $\O(\lg\lg{n}+k).$

    \subparagraph*{Orthogonal Range Queries.}
	Dominance reporting in $3D$ was solved by
    Chazelle and Edelsbrunner~\cite{DBLP:journals/dcg/ChazelleE87} in linear
	space with either $\O((1+k)\lg{n})$ or $\O(\lg^{2}n+k)$ time,
    in pointer-machine (PM) model, with the latter being improved to
    $\O(\lg{n}\lg\lg{n}+k)$ by Makris and Tsakalidis~\cite{DBLP:journals/ipl/MakrisT98}.
    Same authors~\cite{DBLP:journals/ipl/MakrisT98} developed,
    in the word-RAM, a linear-size, $\O(\log{n}+k)$ and $\O((\lg\lg{n}\lg\lg\lg{n}+k)\lg\lg{n})$ query-time data
    structures for the unrestricted case and for points in rank space, respectively.
    Nekrich~\cite{DBLP:conf/compgeom/Nekrich07} presented a word-RAM data structure
    for points in rank space,
	supporting queries in $\O((\lg\lg{n})^{2}+k)$ time, and occupying $\O(n\lg{n})$ words;
    this space was later reduced to linear by Afshani~\cite{DBLP:conf/esa/Afshani08},
    retaining the same query time. Finally, 
    in the same model, a linear-space solution with $\O(\log\log{n}+k)$ query time
    was designed for $3D$ dominance reporting in rank space~\cite{DBLP:conf/esa/Afshani08,Chan:2011:PPS:2133036.2133121}.
    In the PM model, Afshani~\cite{DBLP:conf/esa/Afshani08}
    also presented an $\O(\log{n}+k)$ query time, linear-space data structure
    for the points in $\mathbb{R}^{3}.$

	For the word-RAM model, 
    J{\'{a}}J{\'{a}} et al.~\cite{DBLP:conf/isaac/JaJaMS04} 
    generalized the range counting problem for $d \geq 2$ dimensions
    and proposed a data structure with
	$\O(n(\lg{n}/\lg\lg{n})^{d-2})$ words of space
	and $\O((\lg{n}/\lg\lg{n})^{d-1})$ query time.
    Chan et al.~\cite{DBLP:conf/compgeom/ChanLP11} solved orthogonal range reporting in 
    $3D$ rank space in $\O(n\lg^{1+\eps}{n})$ words of space and $\O(\lg\lg{n}+k)$ query time.

    Nekrich and Navarro~\cite{DBLP:conf/swat/NekrichN12} proposed two tradeoffs for range successor, 
	with either $\O(n)$ or $\O(n\lg\lg{n})$ words of space, and respectively with $\O(\lg^{\eps}{n})$ or $\O((\lg\lg{n})^2)$ query time. 
	Zhou~\cite{DBLP:journals/ipl/Zhou16} later improved upon the query time of the second tradeoff by a factor of $\lg\lg{n},$ within the same space.
    Both results are for points in rank space.

    \subsection{Our Results}
    As $d$-dimensional path queries generalize the corresponding $(d+1)$-dimensional orthogonal range queries,
    we compare results on them to show that our bounds match or nearly match the best results or some of the best tradeoffs on geometric queries in Euclidean space. 
    We present solutions for the (we assume $d$ is a positive integer constant):
    \begin{itemize}
        \item ancestor dominance reporting problem, in $\O(n\lg^{d-2}{n})$ words of space and
              $\O(\lg^{d-1}n+k)$ query time for $d \geq 2$.
              When $d=2,$ this matches the space bound for $3D$ dominance reporting of~\cite{DBLP:conf/esa/Afshani08,Chan:2011:PPS:2133036.2133121},
              while still providing efficient query support. 
              When $d \geq 3,$ we also achieve a tradeoff of $\O(n\lg^{d-2+\eps}{n})$ words of space, with
              query time of $\O(\lg^{d-1} n/(\lg\lg n)^{d-2}+k);$
          \item path successor problem,
            in $\O(n\lg^{d-1}{n})$ 
            words and $\O(\log^{d-1+\eps}{n})$
            query time, for an arbitrarily small positive constant $\eps,$ and $d \geq 2$. 
            These bounds match the first tradeoff for range successor of Nekrich and Navarro~\cite{DBLP:conf/swat/NekrichN12}.
            \footnote{which can be generalized to higher dimensions via standard techniques based on range trees}
            Previously this problem has not been studied even on scalarly-weighted trees;
        \item path counting problem,
            in $\O(n(\frac{\log{n}}{\log\log{n}})^{d-1})$ words of space
            and $\O((\frac{\log{n}}{\log\log{n}})^{d})$ query time for $d \geq 1.$
            This matches the best bound for range counting in $d+1$ dimensions~\cite{DBLP:conf/isaac/JaJaMS04};
        \item path reporting problem, 
            in $\O(n\lg^{d-1+\eps}{n})$ words of space
            and $\O((\lg^{d-1}{n})/(\lg\lg{n})^{d-2}+k)$ query time, for $d \geq 2.$
            When $d=2$, the space matches that of the corresponding result of Chan et al.~\cite{DBLP:conf/compgeom/ChanLP11} on $3D$ range reporting, 
            while the first term in the query complexity is slowed down by a sub-logarithmic factor.
    \end{itemize}
    To achieve our results,
    we introduce a framework for solving range sum queries in arbitrary semigroups
    and extend base-case data structures
    to higher dimensions using universe reduction. A careful
    design with results hailing from succinct data structures
    and tree representations has been necessary,
    both for building space- and time-efficient base data structures,
    and for porting, using {\textit{tree extractions}}, 
    the framework of range trees decompositions
    from general point-sets to tree topologies (\Cref{lem:reductionBinaryCase}).
    We employ a few novel techniques, such as extending the notion of {\textit{maximality}} in Euclidean
    sense to tree topologies, and providing the means of efficient computation thereof (\Cref{section:ancestralReportingSection}).
    Given a weighted tree $T,$ 
    we propose efficient means of zooming into the nodes of $T$
    with weights in the given range in the range tree (\Cref{lemma:markedView}).
    Given the ubiquitousness of the concepts, these technical contributions are likely to be of independent interest.

\section {Preliminaries}\label{section:preliminaries}
\subparagraph*{Notation.}
Given a $d$-dimensional weight vector $\pnt{w} = (w_1,w_2,\ldots,w_d)$, we define vector $\pnt{w}_{i,j}$ to be $(w_i,w_{i+1},\ldots,w_j)$. 
We extend the definition to a range $Q = \prod_{i=1}^{d}[q_i,q'_i]$ by setting $Q_{i,j} = \prod_{k=i}^{j}[q_k,q'_k].$
We use the symbol $\succeq$ for domination:
$\pnt{p} \succeq \pnt{q}$ {\textit{iff}} $\pnt{p}$ dominates $\pnt{q}.$
With $d' \leq d$ and $0 < \eps < 1$ being constants,
a weight vector $\pnt{w}$ is said to be {\emph{$\dimcl{d'}{d}{\eps}-dimensional$}} {\textit{iff}} $\pnt{w} \in [n]^{d'}\times{}[\ceil{\lg^{\eps}{n}}]^{d-d'};$
i.e., each of its first $d'$ weights is drawn from $[n],$
while each of its last $d-d'$ weights is in $[\ceil{\lg^{\eps}{n}}]$. When stating theorems, we define $i/0 = \infty$ for $i > 0.$

During a preorder traversal of a given tree $T$, the $i$th node visited is said to have {\textit{preorder rank}} $i$. 
Preorder ranks are commonly used to identify tree nodes in various succinct data structures which we use as building blocks. 
Thus, we also identify a node by its preorder rank, i.e., node $i$ in $T$ is the node with preorder rank $i$ in $T$.
The path between the nodes $x,y \in T$ is denoted as $P_{x,y},$ both ends inclusive.
For a node $x \in T,$ its set of ancestors, denoted as $\mathcal{A}(x),$ includes $x$ itself;
$\mathcal{A}(x)\setminus{}\{x\}$ is then the set of {\textit{proper ancestors}} of $x.$
Given two nodes $x,y \in T,$ where $y \in \mathcal{A}(x)$, we set $A_{x,y} \triangleq P_{x,y}\setminus{}\{y\}.$

\subparagraph*{Succinct Representations of Ordinal Trees.}
Succinct representations of unlabeled and labeled ordinal trees is a widely researched area.
In a labeled tree, each node is associated with a label over an alphabet. Such a label can serve as a scalar weight;
in our solutions, however, they typically categorize tree nodes into different classes.
Hence we call these assigned values labels instead of weights.
We summarize the previous result used in our solutions, in which a node (resp. ancestor) with
label $\alpha$ is called an {\textit{$\alpha$-node}} (resp. {\textit{$\alpha$-ancestor}}):

\begin{lemma}[\cite{He:2016:DSP:2983296.2905368,DBLP:journals/talg/HeMS12}]\label{lemma:smallAlphabetTrees}
	Let $T$ be an ordinal tree on $n$ nodes, each having a label drawn from $[\sigma],$
 	where $\sigma = \O(\lg^{\eps}{n})$ for some constant $0 < \eps < 1.$
	Then, $T$ can be represented in $n(\lg{\sigma}+2)+\smallO(n)$ bits of space to support the
 	following operations, for any node $x \in T,$ in $\O(1)$ time: 
        {$\mathtt{child}(T,x,i)$}, the $i$-th child of $x$;
        {$\mathtt{depth}(T,x)$}, the number of ancestors of $x$;
        {$\mathtt{level\_anc}(T,x,i)$}, the $i$-th lowest proper ancestor of $x$;
        {$\mathtt{pre\_rank_{\alpha}}(T,x)$}, the number of $\alpha$-nodes that precede $x$ in preorder;
        {$\mathtt{pre\_select_{\alpha}}(T,i)$}, the $i$-th $\alpha$-node in preorder; and
        {$\mathtt{level\_anc_{\alpha}}(T,x,i)$}, the $i$-th lowest $\alpha$-ancestor of $x.$
\end{lemma}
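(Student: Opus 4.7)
The plan is to layer three succinct components: an unlabeled tree-topology representation, a compact representation of the label sequence, and an auxiliary structure that supports the label-augmented ancestor query.

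First, I would encode the unlabeled topology of $T$ with a DFUDS-style succinct tree representation, taking $2n+\smallO(n)$ bits and supporting $\mathtt{child}$, $\mathtt{depth}$, and $\mathtt{level\_anc}$ in $\O(1)$ time; this accounts for the ``$+2n$'' term in the space bound. Second, I would form the string $L[1..n]$ whose $i$th symbol is the label of the node of preorder rank $i$, and store it in $n\lg\sigma+\smallO(n)$ bits using a representation for small-alphabet strings that supports $\mathtt{access}$, $\mathtt{rank}_\alpha$, and $\mathtt{select}_\alpha$ in $\O(1)$ time (feasible since $\sigma=\O(\lg^\eps n)$). The operations $\mathtt{pre\_rank}_\alpha$ and $\mathtt{pre\_select}_\alpha$ follow directly from $\mathtt{rank}_\alpha$ and $\mathtt{select}_\alpha$ on $L$.

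The main obstacle is supporting $\mathtt{level\_anc}_\alpha$ in $\O(1)$ using only $\smallO(n)$ additional bits. Conceptually, for each label $\alpha$ one would like to work with the extracted tree $T_\alpha$ obtained by keeping only $\alpha$-nodes and making the parent of each $\alpha$-node its nearest $\alpha$-ancestor in $T$; then $\mathtt{level\_anc}_\alpha(T,x,i)$ reduces to a plain level-ancestor query in $T_\alpha$ after mapping $x$ via $\mathtt{pre\_rank}_\alpha$ to the relevant node of $T_\alpha$. However, storing each $T_\alpha$ as a separate succinct tree would cost $\sum_\alpha 2n_\alpha=2n$ bits, which breaks the space budget. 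My plan is instead to apply a micro-macro decomposition: partition $T$ into micro-trees of size $\Theta(\lg n/\lg\sigma)$, contract each to form a macro-tree of $\smallO(n)$ nodes, and store an explicit labeled-ancestor structure only at the macro level; within each micro-tree, since labeled micro-trees form a universe of $\mathrm{poly}(n)$ possible shapes, I would precompute a global lookup table mapping $(\text{micro-tree type},\,\text{node},\,\alpha,\,i)$ to the answer in $\O(1)$, taking $\smallO(n)$ bits overall. Queries that jump out of a micro-tree are resolved by a constant number of probes to the macro-level structure plus a table lookup.

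The difficulty lies in Step 3: the combination of labels and tree topology must simultaneously support navigation, preorder rank/select by label, and labeled level ancestors within $\smallO(n)$ auxiliary bits. This works precisely because $\sigma=\O(\lg^\eps n)$: the micro-tree universe remains polynomial in $n$, so universal tables are sublinear; the macro-tree shrinks by a $\Theta(\lg n/\lg\sigma)$ factor, leaving room to store a labeled-ancestor index on it; and the label string $L$ admits constant-time rank/select in $\smallO(n)$ bits for small alphabets. Putting the three layers together gives the claimed $n(\lg\sigma+2)+\smallO(n)$-bit structure with all operations in $\O(1)$ time.
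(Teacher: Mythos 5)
This lemma is not proved in the paper at all; it is imported verbatim from the cited works of He et al.\ and He--Munro--Satti, so the only meaningful comparison is against those constructions, which are indeed built on tree covering as you propose. Your first two layers (a $2n+\smallO(n)$-bit topology encoding giving $\mathtt{child}$, $\mathtt{depth}$, $\mathtt{level\_anc}$, and a preorder label string with constant-time $\mathtt{rank}_\alpha/\mathtt{select}_\alpha$ in $n\lg\sigma+\smallO(n)$ bits for $\sigma=\O(\lg^{\eps}n)$) are sound in outline and match the intended space split.

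The genuine gap is in $\mathtt{level\_anc}_\alpha$, in two places. First, the conceptual reduction is circular: mapping $x$ into $T_\alpha$ ``via $\mathtt{pre\_rank}_\alpha$'' does not work, because the last $\alpha$-node preceding $x$ in preorder is in general \emph{not} an ancestor of $x$; finding the correct entry point into $T_\alpha$ is exactly the lowest-$\alpha$-ancestor query $\mathtt{level\_anc}_\alpha(T,x,1)$ that you are trying to support, so the reduction presupposes its own conclusion. Second, the macro-level space budget does not close. With micro-trees of size $\Theta(\lg n/\lg\sigma)$ the contracted tree has $\Theta(n\lg\sigma/\lg n)$ nodes, and an ``explicit labeled-ancestor structure'' storing even one $\Theta(\lg n)$-bit field per (macro node, label) pair costs $\Theta\bigl((n\lg\sigma/\lg n)\cdot\sigma\cdot\lg n\bigr)=\Theta(n\sigma\lg\sigma)$ bits, which is $\omega(n)$ for $\sigma=\Theta(\lg^{\eps}n)$. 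This is precisely why the cited constructions use a two-level cover --- mini-trees of size about $\lg^{2}n$ subdivided into micro-trees of size $\Theta(\log_\sigma n)$ --- storing label-indexed data only at mini-tree granularity and bridging the micro-to-mini gap by table lookup; a single micro/macro split as you describe cannot stay within $\smallO(n)$ auxiliary bits. Relatedly, ``the universe of labeled micro-trees is $\mathrm{poly}(n)$'' is not sufficient for a $\smallO(n)$-bit universal table: you need the universe to have size $n^{c}$ for a constant $c<1$, which forces the micro-tree size down to $\Theta(\log_\sigma n)$ with a small enough leading constant, and this interacts with the macro-tree size in the way just described.
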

\Cref{lemma:smallAlphabetTrees} also includes a result on representing an unlabeled ordinal tree,
which corresponds to $\sigma \equiv 1,$ in $2n+\smallO(n)$ bits~\cite{DBLP:journals/talg/HeMS12}. 
Another important special case is that of $\sigma = 2;$ here, $T$ is referred to as a {\textit{$0/1$-labeled tree}},
and the storage space becomes $3n+\smallO(n)$ bits. 

\subparagraph*{Tree Extraction.}
{\textit{Tree extraction}}~\cite{He:2016:DSP:2983296.2905368} filters out a subset of nodes
 while preserving the underlying ancestor-descendant relationship among the nodes.
 Namely, given a subset $X$ of tree nodes called {\textit{extracted nodes}}, 
 an {\textit{extracted tree}} $T_X$ can be obtained from the original tree $T$ as follows.
 Consider each node $v \notin X$ in an arbitrary order; let $p$ be $v$'s parent.
 We remove $v$ and all its incident edges from $T$, 
 and plug all its children $v_1,v_2,\ldots,v_k$ (preserving their left-to-right order) into the slot now freed from $v$ in $p$'s list of children.
 After removing all the non-extracted nodes, if the resulting forest $F_X$ is a tree, then $T_X \equiv F_X.$ 
 Otherwise, we create a dummy root $r$ and insert the roots of the trees in $F_X$ as the children of $r$, in the original left-to-right order.
 The preorder ranks and depths of $r$ are both $0$, so that those of non-dummy nodes still start at $1$.
 An original node $x \in X$ of $T$ and its copy, $x'$, in $T_X$ are said to {\textit{correspond}} to each other;
 $x'$ is also said to be the {\textit{$T_X$-{\astralbody}}} of $x,$ and $x$ is the {\textit{T-source}} of $x'.$
 The $T_X$-{\astralbody} of a node $y \in T$ ($y$ is not required to be in $X$) is more generally defined to be the node 
 $y' \in T_X$ corresponding to the lowest extracted ancestor of $y,$ i.e.
 to the lowest node in $\mathcal{A}(y) \cap{} X.$

\subparagraph*{Representation of a Range Tree on Node Weights by Hierarchical Tree Extraction.}
Range trees are widely used in solutions to query problems in Euclidean space. 
He et al.~\cite{He:2016:DSP:2983296.2905368} further applied the idea of range trees to scalarly-weighted trees. 
They defined a conceptual range tree on node weights and represented it by a hierarchy of tree extractions. 
We summarize its workings when the weights are in rank space.

We first define a conceptual range tree on $[n]$ with branching factor $f$, where $f = \O(\lg^{\eps} n)$ for some constant $0 < \eps < 1.$ 
Its root represents the entire range $[n]$.
Starting from the root level, we keep partitioning each range, $[a,b]$, 
at the current lowest level into $f$ child ranges $[a_1,b_1],\ldots,[a_f,b_f],$
where $a_i = \ceil{(i-1)(b-a+1)/f} + a$ and $b_i = \ceil{i(b-a+1)/f}+a-1$.
This ensures that, if weight $j \in [a,b]$, then $j$ is contained in the child range with subscript 
$\ceil{f(j-a+1)/(b-a+1)}$, which can be determined in $\O(1)$ time.
We stop partitioning a range when its size is $1$. 
This range tree has $h = \ceil{\log_f{n}} + 1$ levels. The root is at level $1$ and the bottom level is level $h$. 

For 
$1 \leq l < h,$ we construct an auxiliary tree $T_l$ for level $l$ of this range tree as follows: 
Let $[a_1,b_1],\ldots,[a_m,b_m]$ be the ranges at level $l$. 
For a range $[a,b],$ let $F_{a,b}$ stand for the extracted 
forest of the nodes of $T$ with weights in $[a,b].$ 
Then, for each range $[a_i,b_i],$ we extract $F_{a_i,b_i}$ 
and plug its roots as children of a dummy root $r_l$, retaining 
the original left-to-right order of the roots within the forest.
Between forests, the roots in $F_{a_{i+1},b_{i+1}}$ are the right siblings of the roots in  $F_{a_i,b_i}$, for any $i \in [m-1]$.
We then label the nodes of $T_l$ using the reduced alphabet $[f],$ 
as follows. Note that barring the dummy root $r_l$, 
there is a bijection between the nodes of $T$ and those of $T_l$. 
Let $x_l \in T_l$ be the node corresponding to $x \in T.$
In the range tree, let $[a,b]$ be the level-$l$ range containing the weight of $x.$
Then, at level $l+1$, if the weight of $x$ is contained in the $j$th child range of $[a,b],$
then $x_l \in T_l$ is labeled $j.$
Each $T_l$ is represented by \Cref{lemma:smallAlphabetTrees} in $n(\lg f + 2) + \smallO(n)$ bits,
so the total space cost of all the $T_l$'s is $n\lg n + (2n+\smallO(n))\log_f n$ bits.
When $f = \omega(1)$, this space cost is $n + \smallO(n)$ words.
This completes the outline of hierarchical tree extraction.
Henceforth, we shorthand as $\mathscr{T}_v$ the
extraction from $T$ of the nodes with weights in $v$'s range, for a node $v$ of the range tree. 
The following lemma maps $x_l$ to $x_{l+1}$:

\begin{lemma}[\cite{He:2016:DSP:2983296.2905368}]\label{lemma:rangetreemapping}
  Given a node $x_l \in T_l$ and the range $[a,b]$ of level $l$ containing the weight of $x$, 
  node $x_{l+1} \in T_{l+1}$ can be located in $\O(1)$ time, for any $l \in [h-2]$.
\end{lemma}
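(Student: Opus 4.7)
The plan hinges on the fact that hierarchical tree extraction is compatible with composition: extracting from $T$ the nodes whose weights lie in $[a,b]$, and then extracting from the resulting forest the $j$-labeled nodes (i.e., those whose weights fall into the $j$-th child range $[a_j,b_j]$ of $[a,b]$), produces, in a preorder-preserving way, the same forest as directly extracting from $T$ the nodes of weight in $[a_j,b_j]$. Consequently, $F_{a,b}$ sits as a contiguous preorder range in $T_l$, and within it the $j$-labeled nodes correspond bijectively, in the same preorder, to the nodes of $F_{a_j,b_j}$, which likewise occupy a contiguous preorder range in $T_{l+1}$.

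First, I would read off the label $j$ of $x_l$ in constant time via \Cref{lemma:smallAlphabetTrees}, and obtain $[a_j,b_j]$ from the closed-form formula for child ranges given in the construction; this localizes $x_{l+1}$ to the block of $T_{l+1}$ occupied by $F_{a_j,b_j}$. Next, I would compute the local preorder rank of $x_l$ among the $j$-labeled nodes within $F_{a,b}$ as
\[
r \;=\; \mathtt{pre\_rank}_j(T_l, x_l) \;-\; \mathtt{pre\_rank}_j(T_l, z_{a,b}),
\]
where $z_{a,b}$ is the node of $T_l$ immediately preceding the first node of $F_{a,b}$ in preorder (or the dummy root $r_l$ if $F_{a,b}$ starts at the top). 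The starting preorder rank of $F_{a,b}$ in $T_l$ is read from a small precomputed offset table, indexed by $a$, so $z_{a,b}$ is the node at that preorder rank minus one. A symmetric offset table for $T_{l+1}$, indexed by $a_j$, yields the starting preorder rank $p'$ of $F_{a_j,b_j}$; by the isomorphism, $x_{l+1}$ is simply the node of $T_{l+1}$ with preorder rank $p' + r - 1$, accessible in $\O(1)$ time by \Cref{lemma:smallAlphabetTrees}.

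Across all $h = \O(\log_f n)$ levels, the total number of range-offsets stored is $\sum_l f^{l-1} = \O(n)$, so the offset tables occupy $\O(n)$ words and do not dominate the $n + \smallO(n)$ words already spent on the $T_l$'s. Each of the $\O(1)$ subcalls is supported in constant time by the operations of \Cref{lemma:smallAlphabetTrees}, so the whole mapping runs in $\O(1)$ time as required.

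The main obstacle is a careful formal justification of the double-extraction-equals-single-extraction isomorphism, and in particular the preservation of preorder under iterated extraction. One must walk through the definition of tree extraction and verify that contracting a non-extracted ancestor preserves the relative left-to-right order of the surviving descendants, so that the relative preorder of any subset of nodes is stable under successive extractions and under the combined single extraction. Once this is in hand, all of the index arithmetic above is justified.
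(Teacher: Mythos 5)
The paper states \Cref{lemma:rangetreemapping} as a cited result of He et al.\ and gives no proof of its own; your reconstruction via rank arithmetic on the labeled trees $T_l$ --- resting on the facts that extraction preserves the relative preorder of surviving nodes, that iterated extraction composes with direct extraction, and that each $F_{a,b}$ occupies a contiguous preorder block --- is exactly the argument underlying the cited work, and your space accounting for the offset tables is sound. The only nit is a possible off-by-one in your formula for $r$: since $\mathtt{pre\_rank}_j(T_l,\cdot)$ is defined to count the $j$-nodes \emph{strictly preceding} its argument, you should evaluate it at the first node of $F_{a,b}$ rather than at the node $z_{a,b}$ immediately before it in preorder (which may itself carry label $j$ and would then be dropped from the count).
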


Later, Chan et al.~\cite{DBLP:journals/algorithmica/ChanHMZ17} augmented this representation with 
{\textit{ball-inheritance data structure}} to map an arbitrary $x_l$ back to $x$: 

\begin{lemma}[\cite{DBLP:journals/algorithmica/ChanHMZ17}]\label{lemma:conceptualTreeBallInheritance}
    Given a node $x_l \in T_l,$ where $1 \leq l < h,$ the node
	$x \in T$ that corresponds to $x_l$ can be found using $\O(n\lg{n}\cdot{}\mathtt{s}(n))$ bits of additional
	space and $\O(\mathtt{t}(n))$ time, 
	where
        (a) $\mathtt{s}(n) = \O(1)\text{ and }\mathtt{t}(n) = \O(\lg^{\epsilon}n)$; or
        (b) $\mathtt{s}(n) = \O(\lg\lg{n})\text{ and }\mathtt{t}(n) = \O(\lg\lg{n})$; or
        (c) $\mathtt{s}(n) = \O(\lg^{\epsilon}n)\text{ and }\mathtt{t}(n) = \O(1).$
\end{lemma}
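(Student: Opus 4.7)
The plan is to adapt the standard ball-inheritance machinery of Chan et al. from range trees over point sets to the hierarchical tree extraction setting. Two ingredients are needed: a fast one-step reverse of \Cref{lemma:rangetreemapping}, and a sampling scheme over levels to avoid iterating that reverse step too many times.

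For the one-step reverse, I would argue the construction of $T_{l+1}$ makes it symmetric to the forward case. Given $y \in T_{l+1}$, the level-$(l+1)$ range $[a',b']$ containing the weight of its $T$-source is determined by which contiguous block of children of the dummy root $r_{l+1}$ the ancestral chain of $y$ enters; $\mathtt{depth}$ and $\mathtt{level\_anc}$ from \Cref{lemma:smallAlphabetTrees} on $T_{l+1}$, together with a precomputed table of block boundaries of size $O(f^h \lg n) = \smallO(n)$ bits, identify this block in $O(1)$ time. The parent range $[a,b]$ at level $l$ and the child index $j$ with $[a',b'] = [a_j, b_j]$ are then immediate, and $j$ is precisely the label carried by $x_l$ in $T_l$. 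The rank of $y$ within its block equals the rank of $x_l$ among $j$-labeled nodes in $T_l$, so $\mathtt{pre\_select}_j$ on $T_l$ returns $x_l$ in $O(1)$ time.

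For the sampling, I would select an arithmetic progression of levels $L \subseteq [h-1]$ and store, at each $l^* \in L$ and for each non-dummy $y \in T_{l^*}$, a $\lceil \lg n \rceil$-bit pointer to its $T$-source. To answer a query starting at $x_l$, I first apply the one-step reverse at most $|\{l^* \in L : l^* \geq l\}|^{-1} \cdot h = O(h/|L|)$ times to reach the nearest sampled level $l^* \geq l$, then dereference the stored pointer in $O(1)$. The three tradeoffs arise from three choices of $|L|$: for (a) a sparse sampling of $\Theta(h/\lg^{\epsilon} n)$ levels gives $O(n \lg n)$ bits of extra space and $O(\lg^{\epsilon} n)$ query time; for (b) sampling $\Theta(h/\lg\lg n)$ levels gives $O(n \lg n \lg\lg n)$ bits and $O(\lg\lg n)$ time; for (c) sampling every level gives $O(n \lg^{1+\epsilon} n)$ bits (the extra $\lg^{\epsilon} n$ factor absorbs the $h = O(\lg n / \lg\lg n)$ levels with room to spare) and $O(1)$ time.

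The main obstacle I anticipate is the one-step reverse rather than the sampling argument itself: extraction merges children of deleted nodes into their grandparents, so locating which block of $r_{l+1}$'s children contains $y$ requires care, especially when $y$ is deep inside a forest $F_{a',b'}$ rather than itself a child of $r_{l+1}$. The fix is to use $\mathtt{level\_anc}(T_{l+1}, y, \mathtt{depth}(T_{l+1}, y) - 1)$ to reach $y$'s ancestor directly under $r_{l+1}$, combined with a constant-time rank-among-siblings query supported by \Cref{lemma:smallAlphabetTrees}, ensuring the entire reverse step is truly $O(1)$.
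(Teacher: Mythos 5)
First, a point of reference: the paper offers no proof of this lemma at all --- it is imported as a black box from Chan et al.~\cite{DBLP:journals/algorithmica/ChanHMZ17}, so your attempt has to be measured against the construction in that reference (an adaptation of the Chan--Larsen--P\u{a}tra\c{s}cu ball-inheritance structure to hierarchical tree extraction). Your one-step reverse map is broadly the right idea, with two local slips: the rank of $y$ within its preorder block of $T_{l+1}$ matches the rank of $x_l$ among $j$-labeled nodes \emph{restricted to the preorder interval of $T_l$ assigned to the parent range} $[a,b]$, not among all $j$-labeled nodes of $T_l$, so a global $\mathtt{pre\_select}_j$ must be offset by a $\mathtt{pre\_rank}_j$ at the block boundary; and $\O(f^h\lg n)=\O(fn\lg n)$ bits is not $\smallO(n)$ (what you actually need --- $\O(\lg n)$ bits per node of the range tree, which has $\O(n)$ nodes --- is $\O(n)$ words and is fine).

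The genuine gap is the sampling scheme. A flat arithmetic progression of $m$ sampled levels, each storing $n$ full $\lceil\lg n\rceil$-bit pointers to the $T$-source, costs $\Theta(mn\lg n)$ bits and forces $\Theta(h/m)$ one-step walks, so its space--time product is pinned at $\Theta(nh\lg n)$ bit-steps; none of the three tradeoffs is reachable this way. Concretely, for (a) your $\Theta(h/\lg^{\epsilon}n)$ sampled levels cost $\Theta(n\lg^{2-\epsilon}n)$ bits when $h=\Theta(\lg n)$ (the binary range tree of \Cref{section:pathRangeSuccessor}, where this lemma is actually deployed), not $\O(n\lg n)$; conversely, a budget of $\O(n\lg n)$ bits permits only $\O(1)$ sampled levels and hence $\Omega(h)$ query time. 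For (b) you get $\Theta(n\lg^{2}n/\lg\lg n)$ bits, not $\O(n\lg n\lg\lg n)$; and for (c), $h\cdot n\lg n$ bits is $\omega(n\lg^{1+\epsilon}n)$ because $\lg n/\lg\lg n$ is not $\O(\lg^{\epsilon}n)$. The actual construction is multi-scale: for each $i$, jump pointers spanning $B^{i}$ levels are stored only at levels divisible by $B^{i}$, and --- crucially --- a jump of $\delta$ levels is encoded in $\O(\delta\lg f)$ bits (essentially the sequence of labels met in $T_{\ell},T_{\ell+1},\ldots$, decoded with the $\mathtt{pre\_rank}/\mathtt{pre\_select}$ machinery) rather than in $\lceil\lg n\rceil$ bits, so each granularity contributes only $\O(n\lg n)$ bits and there are $\log_{B}\lg n$ granularities; choosing $B=\lg^{\epsilon}n$ gives (a), $B=2$ gives (b), and (c) follows by additionally storing full pointers for the coarse jumps so that $\O(1)$ jumps suffice. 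Both ingredients --- geometrically growing granularities and cheap encodings of short jumps --- are absent from your proposal, and without them the claimed bounds cannot be met.
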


\subparagraph*{Path Minimum in (Scalarly-)Weighted Trees.}
In a weighted tree, path minimum query asks for the node with the smallest weight in the given path.
We summarize the best result on path minimum; 
in it, ${\alpha}(m,n)$ and ${\alpha}(n)$ are the inverse-Ackermann functions:

\begin{lemma}[\cite{DBLP:journals/algorithmica/ChanHMZ17}]\label{lemma:lemmaPathMinimum}
	An ordinal tree $T$ on $n$ weighted nodes can be indexed
    (a) using $\O(m)$ bits of space to support path minimum queries
    in $\O({\alpha}(m,n))$ time and $\O({\alpha}(m,n))$ accesses
    to the weights of nodes, for any integer $m \geq n;$
    or (b) using $2n+\smallO(n)$ bits of space to support
    path minimum queries in $\O({\alpha}(n))$ time and $\O({\alpha}(n))$
    accesses to the weights of nodes.
    In particular, when $m = \Theta(n\lg^{**}{n}),$\footnote{
    $\lg^{**}{n}$ stands for the number of times
    an iterated logarithm function $\lg^{*}$ needs
    to be applied to $n$ in order for the result to become at most $1$.}
    one has ${\alpha}(m,n) = \O(1),$ and therefore
    (a) includes the result that
    $T$ can be indexed in $\O(n\lg^{**}{n})$ bits
    of space to support path minimum queries in $\O(1)$ time
    and $\O(1)$ accesses to the weights of nodes.
\end{lemma}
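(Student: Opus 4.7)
The plan is to combine an LCA-based reduction with a hierarchical decomposition of the tree and Chazelle's inverse-Ackermann range-minimum machinery.

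First, I would reduce path minimum to \emph{ancestor} path minimum. Given $x,y \in T$, compute their lowest common ancestor $z$ in constant time using the $2n+\smallO(n)$-bit succinct tree representation from \Cref{lemma:smallAlphabetTrees}; then $P_{x,y}=P_{x,z}\cup P_{z,y}$ and the answer is whichever of the two ancestor-path minima is smaller. This step costs $\O(1)$ time and $\O(1)$ weight accesses, so the whole problem reduces to the ancestor-path case.

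Second, for the ancestor path from $x$ up to an ancestor $a$, I would apply a heavy-path decomposition, which splits $P_{x,a}$ into $\O(\lg n)$ contiguous segments, each contained in a single heavy path. On each heavy path the nodes are linearly ordered, so the query becomes a range minimum on an array of weights. Using a classical constant-time RMQ this already yields $\O(\lg n)$ time; to push the complexity down to $\O(\alpha(m,n))$ I would adapt Chazelle's tower-of-tables RMQ. That structure hierarchically partitions each heavy-path array with geometrically growing fanouts and, at each level, stores only the index of the child block containing the block-minimum. A query walks down $\O(\alpha(m,n))$ levels, and the weights themselves are consulted only at the $\O(\alpha(m,n))$ candidate leaves; tuning the fanouts to the Ackermann recursion keeps the total bit-size at $\O(m)$ for any $m\geq n$. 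A final tournament over the $\O(\lg n)$ segment-minima is itself indexed by the same tower, so that combining the per-segment answers still uses only $\O(\alpha(m,n))$ weight comparisons.

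Third, for the succinct variant (b), I would apply a Farzan--Munro-style micro--macro decomposition: cut $T$ into micro-trees of size $\Theta(\lg n)$, build the structure of part (a) on the contracted macro-tree with $m=\smallO(n)$, and encode the shapes and intra-micro-tree path-minimum answers in a shared lookup table of $\smallO(n)$ bits. Queries inside a single micro-tree are resolved by table lookup, while cross-micro-tree queries are routed through the macro structure; both branches use only $\O(\alpha(n))$ weight accesses. The tree is itself stored in $2n+\smallO(n)$ bits by \Cref{lemma:smallAlphabetTrees} with $\sigma\equiv 1$.

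The main obstacle, I expect, is the Chazelle-style tower construction: balancing the per-level fanouts against the storage budget so that both the number of levels (query time) and the number of weight comparisons collapse to $\alpha(m,n)$, and arguing carefully that the ``which-child-block-holds-the-minimum'' labels suffice to avoid \emph{any} weight access except at the final $\O(\alpha(m,n))$ candidates. Porting this array-style recursion to trees is clean once the heavy-path linearization has been imposed; everything else---LCA, heavy paths, micro--macro encoding---is standard off-the-shelf machinery that composes without interfering with the Ackermann-type bookkeeping.
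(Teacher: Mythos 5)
This lemma is not proved in the paper at all: it is imported verbatim from Chan et al.~\cite{DBLP:journals/algorithmica/ChanHMZ17}, so the only meaningful comparison is against that reference's construction, and there your proposal has a genuine gap. The heavy-path decomposition splits an ancestor path into $\O(\lg n)$ segments, and even if each segment's minimum position is obtained for free (for arrays this is already achievable with a $2n+\smallO(n)$-bit Fischer--Heun-style RMQ index using \emph{zero} weight accesses, so no Ackermann machinery is needed within a single heavy path), you still must compare the $\O(\lg n)$ candidate minima against one another, and each such comparison requires an access to a node weight. Your claimed fix --- a ``final tournament over the $\O(\lg n)$ segment-minima \ldots indexed by the same tower'' --- does not go through: the set of segments is query-dependent (the lower endpoint of each heavy-path piece depends on where the query path enters that heavy path), so the segment minima do not sit in any static array over which a tower of tables could have been precomputed. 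This combination step is exactly where the difficulty of the problem lives, and it is why Chan et al.\ abandon heavy-path linearization entirely and instead build a recursive decomposition directly on the tree topology (a topology-tree-like hierarchy whose depth tracks the inverse-Ackermann recursion), which is what lets both the query time and the number of weight accesses collapse to $\O(\alpha(m,n))$ while keeping the index at $\O(m)$ bits.

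Your treatment of part (b) --- tree covering into micro-trees, a shared lookup table for intra-micro-tree queries, and part (a) applied to the contracted macro-tree --- is in the right spirit and close to how the succinct variant is actually derived, but it inherits the unproven core, since it presupposes a working part (a). The LCA-based reduction from general paths to ancestor paths is fine and standard.
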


\section {Reducing to Lower Dimensions}\label{section:masterLemmaBinaryCase}
This section presents a general framework for reducing the problem of answering a 
$d$-dimensional query to the same query problem in $(d-1)$ dimensions,
by generalizing the standard technique of range tree
decomposition for the case of tree topologies weighted with multidimensional vectors.
To describe this framework, we introduce a {\textit{$d$-dimensional semigroup path sum query problem}} 
which is a generalization of all the query problems we consider in this paper. 
Let $(G,\oplus)$ be a semigroup and $T$ a tree on $n$ nodes, 
in which each node $x$ is assigned a $d$-dimensional weight vector $\pnt{w}(x)$ 
and a semigroup element $g(x),$ with the semigroup sum operator denoted as $\oplus.$
Then, in a $d$-dimensional semigroup path sum query, we are given a path $P_{x,y}$ in $T,$ 
an orthogonal query range $Q$ in $d$-dimensional space, and we are asked 
to compute $\sum_{z \in P_{x,y}\text{ and }\pnt{w}(z) \in Q}g(z)$.
When the weight vectors of the nodes and the query range are both from a $\dimcl{d'}{d}{\eps}$-dimensional space,
the {\textit{$\dimcl{d'}{d}{\eps}$-dimensional semigroup path sum query problem}} is defined analogously.

The following lemma presents our framework for solving a {\textit{$d$-dimensional semigroup path sum query problem}};
its counterpart in $\dimcl{d'}{d}{\eps}$-dimensional space is given in \Cref{section:masterLemmaGeneral}.

\begin{lemma}\label{lem:reductionBinaryCase}
    Let $d$ be a positive integer constant.
    Let $G^{(d-1)}$ be an $\mathtt{s}(n)$-word data structure for a
    $(d-1)$-dimensional semigroup path sum problem of size $n$. 
    Then, there is an $\O(\mathtt{s}(n)\lg{n}+n)$-word data structure $G^{(d)}$ for a $d$-dimensional semigroup path sum problem of size $n,$ 
    whose components include $\O(\lg n)$ structures of type $G^{(d-1)}$, each of which is constructed over a tree on $n+1$ nodes.
    Furthermore, $G^{(d)}$ can answer a $d$-dimensional semigroup path sum query by performing 
    $\O(\lg n)$ $(d-1)$-dimensional queries using these components and returning the semigroup sum of the answers. 
    Determining which queries to perform on structures of type $G^{(d-1)}$ requires $\O(1)$ time per query.
    \footnote{
        It may be tempting to simplify the statement of the lemma by defining $\mathtt{t}(n)$ as the query time of $G^{(d-1)}$ 
        and claiming that $G^{(d)}$ can answer a query in $\O(\mathtt{t}(n)\lg n)$ time. 
        However, this bound is too loose when applying this lemma to reporting queries.
    }
\end{lemma}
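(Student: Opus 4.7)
The plan is to port the classical binary range-tree decomposition to the path-query setting via the hierarchical tree extraction machinery of the preliminaries. I will build a conceptual range tree on the $d$-th weight component with branching factor $f = 2$; this has $h = \Theta(\lg n)$ levels, and the tree-extraction representation yields, for each level $l \in [h-1]$, a labelled tree $T_l$ on $n+1$ nodes (the extra node being a dummy root). On each $T_l$ I attach to every non-dummy node the $(d-1)$-prefix of its $T$-source's weight vector together with the semigroup element $g$; the dummy root receives a sentinel weight. I then build one instance of $G^{(d-1)}$ on each such augmented $T_l$, contributing $\O(\mathtt{s}(n)\lg n)$ words. The tree extraction infrastructure itself occupies $\O(n)$ words by the argument sketched in the preliminaries (with $f=2$, each $T_l$ takes $3n+\smallO(n)$ bits and there are $\O(\lg n)$ of them), yielding the claimed total.

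For a query on $P_{x,y}$ with range $Q = Q_{1,d-1} \times [q_d, q'_d]$, I would first reduce $P_{x,y}$ to two ancestor-descendant subpaths via an $\operatorname{LCA}$ structure, so assume $y \in \mathcal{A}(x)$. I then decompose $[q_d, q'_d]$ into $m = \O(\lg n)$ canonical ranges $R_1, \ldots, R_m$, with at most two at each level of the range tree. For each $R_i$ at level $l_i$ I identify views $\hat x_i, \hat y_i \in T_{l_i}$ such that $P_{\hat x_i, \hat y_i}$ inside $T_{l_i}$ corresponds, under the tree-extraction bijection, to $P_{x,y} \cap X_{R_i}$, and then invoke $G^{(d-1)}$ on $T_{l_i}$ along that path with the $(d-1)$-dim range $Q_{1,d-1}$. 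The final answer is the semigroup sum of the $m$ partial answers. Correctness rests on two facts: tree extraction preserves the ancestor-descendant relation (so each partial sum is precisely $\bigoplus_{z \in P_{x,y} \cap X_{R_i}} g(z)$ when restricted to the first $d-1$ components lying in $Q_{1,d-1}$), and the $R_i$ partition $[q_d, q'_d]$.

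The hard part will be computing each pair $(\hat x_i, \hat y_i)$ in $\O(1)$ amortised time, so that the per-invocation overhead promised by the lemma holds. I would descend the range tree along the leaf-paths of $q_d$ and $q'_d$ in parallel, maintaining at each visited node $u$ the views of $x$ and $y$ in the extraction for $u$'s range; descents along one's own range are one call to \Cref{lemma:rangetreemapping}, while branching off to a canonical sibling $R$ uses $\mathtt{level\_anc}_\alpha$ from \Cref{lemma:smallAlphabetTrees} on the parent-level tree with $\alpha$ encoding $R$ as a child of $u$'s range, producing the views in $X_R$ in $\O(1)$ time. A subtlety: when $y \notin X_{R_i}$, the view of $y$'s lowest $X_{R_i}$-ancestor can lie strictly above $y$ in $T$ and thus outside $P_{x,y}$, so I would redefine $\hat y_i$ to be the shallowest copy of an $X_{R_i}$-ancestor of $x$ whose $T$-depth is at least $\mathtt{depth}(T, y)$, which is again one $\O(1)$ $\mathtt{level\_anc}$ query in $T_{l_i}$. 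Across the $\O(\lg n)$ canonical ranges, the total navigational work is $\O(\lg n)$, and only $\O(1)$ is spent determining each $G^{(d-1)}$ call, as the footnote requires.
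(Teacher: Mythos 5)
Your proposal follows essentially the same route as the paper: a binary range tree on the $d$th weights represented by hierarchical tree extraction, one $G^{(d-1)}$ per level built on the $(d-1)$-dimensionally reweighted $T_l$, a canonical decomposition of $[q_d,q'_d]$ with at most two nodes per level, and $\O(1)$-time maintenance of the views of the path endpoints while descending via \Cref{lemma:rangetreemapping} and $\mathtt{level\_anc}_\alpha$ (the paper sidesteps your ``subtlety'' about $\hat y_i$ by querying the half-open path $A_{x_u,z_u}$ up to, but excluding, the view of the LCA, rather than adjusting the top endpoint's depth). The only loose end is that splitting $P_{x,y}$ into \emph{two} closed ancestor--descendant subpaths counts the LCA twice, which is incorrect for non-idempotent semigroups (e.g.\ counting); the paper decomposes into $A_{x,z}$, $\{z\}$, and $A_{y,z}$ to avoid this.
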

\begin{proof}
  We define a conceptual range tree $R$ with branching factor $2$ 
  over the $d$th weights of the nodes of $T$ and represent it using hierarchical tree extraction as in \Cref{section:preliminaries}.
  For each level $l$ of the range tree, we define a tree $T_l^*$ with the same topology as $T_l$.
  We assign $(d-1)$-dimensional weight vectors and semigroup elements to each node, $x'$, in $T_l^*$ as follows.
  If $x'$ is not the dummy root, 
  then $\pnt{w}(x')$ is set to be $(w_1(x),\ldots,w_{d-1}(x))$,
  where $x$ is the node of $T$ corresponding to $x'.$
  We also set $g(x') = g(x)$. 
  If $x'$ is the dummy root, then its first $(d-1)$ weights are $-\infty,$
  while $g(x')$ is set to an arbitrary element of the semigroup. 
  We then construct a data structure, $G_l$, of type $G^{(d-1)}$, over $T_l^*$.
  The data structure $G^{(d)}$ thus comprises the structures $T_l$ and $G_l$, over all $l.$
  The range tree has $\O(\lg n)$ levels, each $T_l^*$ has $n+1$ nodes, 
  and the $G_l$s are the $\O(\lg n)$ structures of type $G^{(d-1)}$ referred to in the statement.
  As all the structures $T_l$ occupy $n+\smallO(n)$ words, $G^{(d)}$ occupies $\O(\mathtt{s}(n)\lg{n}+n)$ words.

  Next we show how to use $G^{(d)}$ to answer queries.
  Let $P_{x,y}$ be the query path and $Q = \prod_{j=1}^{d}[q_j,q_j']$ be the query range.
  To answer the query, we first decompose $P_{x,y}$ into $A_{x,z}$, $\{z\},$ and $A_{y,z},$ where $z$ is the lowest common ancestor of $x$ and $y$,
  found in $\O(1)$ time via $\mathtt{LCA}$ in $T_1$. 
  It suffices to answer three path semigroup sum queries using each subpath and $Q$ as query parameters, 
  as the semigroup sum of the answers to these queries is the answer to the original query. 
  Since the query on subpath $\{z\}$ reduces to checking whether $\pnt{w}(z) \in Q,$ 
  we show how to answer the query on $A_{x,z}$; the query on $A_{y,z}$ is then handled similarly. 
  To answer the query on $A_{x,z}$, we perform a standard top-down traversal in the range tree
  to identify up to two nodes 
  at each level representing ranges that contain exactly one of $q_{d}$ or $q_{d}'.$
  Let, thus, $v$ be the node that we are visiting, in the range tree $R.$
  We maintain {\textit{current nodes}}, $x_v$ and $z_v$ (initialized
  as respectively $x$ and $z$) local to the current level $l;$
  they are the nodes in $T_l$ that correspond to $\mathscr{T}_{v}$-{\astralbody} of 
  the original query nodes $x$ and $z.$
  Nodes $x_v$ and $z_v$ are kept up-to-date in $\O(1)$ time as we descend the levels of the range tree.
  Namely, when descending to the $j$th ($j \in \{0,1\}$) child of the node $v,$
  we identify, via \Cref{lemma:rangetreemapping},
  the corresponding nodes in $T_{l+1},$ for nodes $\mathtt{level\_anc}_j(T_{l},x_v,1)$ and $\mathtt{level\_anc}_j(T_{l},z_v,1).$
  
  For each node $v$ identified at each level $l$, such that $v$'s range contains
  $q_d$ but not $q_d',$ we check if it is its left child-range that contains $q_d.$
  If so, we perform a $(d-1)$-dimensional semigroup range sum query 
  with the following parameters:
  (i) the query range $[q_1, q_1']\times[q_2, q_2']\times\ldots\times[q_{d-1}, q_{d-1}']$
  (i.e. we drop the last range); and
  (ii) the query path is $A_{x_{u},z_{u}}$, where $x_{u}$ and $z_{u}$ are the nodes in $T_{l+1}$ corresponding to the 
  $\mathscr{T}_{u}$-{\astralbody}s of $x$ and $z,$
  with $u$ being the right child of $v;$ 
  this is analogous to updating $x_v$ and $z_v,$
  i.e. applying \Cref{lemma:rangetreemapping} to nodes $\mathtt{level\_anc}_1(T_l,x_v,1)$, $\mathtt{level\_anc}_1(T_l,z_v,1).$
  For each node whose range contains $q_d'$ but not $q_d,$
  a symmetrical procedure is performed by considering its left child.

  The semigroup sum of the answers to these $\O(\lg{n})$ queries is the answer to the original query.
\end{proof}

\section{Space Reduction Lemma for Non-Constant Branching Factor}\label{section:masterLemmaGeneral}
This section presents a general framework for reducing the problem of answering a $\dimcl{d'}{d}{\eps}$-dimensional query 
to the same query problem in $\dimcl{d'-1}{d}{\eps}$ dimensions,
by generalizing the approach of~\cite{DBLP:conf/isaac/JaJaMS04} for the case of trees weighted with multidimensional vectors.
\begin{lemma}\label{lem:reductionGeneral}
    Let $d$ and $d'$ be positive integer constants such that $d' \leq d$, and $\eps$ be a constant in $(0,1).$
    Let $G^{(d'-1)}$ be an $s(n)$-word data structure for a $\dimcl{d'-1}{d}{\eps}$-dimensional semigroup path sum problem of size $n$. 
    Then, there is an $\O(s(n)\lg n/\lg\lg n+n)$-word data structure $G^{(d')}$ for a $\dimcl{d'}{d}{\eps}$-dimensional semigroup path sum problem 
	of size $n,$ whose components include $\O(\lg n / \lg\lg n)$ structures of type $G^{(d'-1)}$, 
	each of which is constructed over a tree on $n+1$ nodes.
    Furthermore, $G^{(d')}$ can answer a $\dimcl{d'}{d}{\eps}$-dimensional semigroup path sum query by performing 
	$\O(\lg n / \lg\lg n)$ $\dimcl{d'-1}{d}{\eps}$-dimensional queries using these components and 
	returning the semigroup sum of the answers. 
  	Determining which queries to perform on structures of type $G^{(d'-1)}$ requires $\O(1)$ time per query. 
\end{lemma}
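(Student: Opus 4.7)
The plan is to follow the proof of \Cref{lem:reductionBinaryCase}, but to replace the branching factor of $2$ in the conceptual range tree built on the $d'$-th weights with $f = \lceil \lg^{\eps} n \rceil$, so that the height becomes $h = \O(\lg n / \lg\lg n)$. The hierarchical tree extraction described in \Cref{section:preliminaries} already supports this branching factor, yielding auxiliary trees $T_1, \ldots, T_{h-1}$ with labels in $[f]$. The challenge is that each internal range-tree node now has $f$ children, so the query range $[q_d, q_d']$ splits into up to two boundary children and a contiguous block of middle children fully contained in $[q_d, q_d']$; a naive approach that issued one recursive query per middle child would cost $\Theta(f \cdot \lg n / \lg\lg n)$ queries and defeat the savings in height.

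The key idea is to collapse all middle children at a given level into a single $\dimcl{d'-1}{d}{\eps}$-dimensional query by encoding the child-subrange index as a new small coordinate. Concretely, for each level $l \in [h-1]$, I define $T_l^*$ to have the same topology as $T_l$, and assign each non-dummy node $x' \in T_l^*$ the weight vector obtained from the vector of its $T$-source $x$ by (i) deleting the $d'$-th coordinate $w_{d'}(x)$ and (ii) inserting in its place the $[f]$-label of $x_l$ in $T_l$, which indicates into which child-subrange at level $l+1$ the weight $w_{d'}(x)$ falls. Since $f \leq \lceil \lg^{\eps} n \rceil$, this new coordinate is small, yielding a legitimate $\dimcl{d'-1}{d}{\eps}$-dimensional weight vector; the dummy-root values and semigroup element $g(x') = g(x)$ mirror the conventions of \Cref{lem:reductionBinaryCase}. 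Building $G_l$ of type $G^{(d'-1)}$ over each $T_l^*$ costs $s(n)$ words per level, for $\O(s(n) \lg n / \lg\lg n)$ words in total, plus $\O(n)$ words for the hierarchical extraction itself.

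Query answering then proceeds as in \Cref{lem:reductionBinaryCase}: decompose $P_{x,y}$ via LCA into two ancestral subpaths and a singleton, and for each ancestral subpath $A_{x,z}$ traverse the range tree top-down, maintaining the views $x_v, z_v$ in the current $T_l$ via \Cref{lemma:rangetreemapping} in $\O(1)$ time per descent. At each range-tree node $v$ at level $l$ whose range strictly contains $[q_d, q_d']$, I identify the contiguous label interval $[j_1, j_2]$ corresponding to middle children at level $l+1$ and issue a single $\dimcl{d'-1}{d}{\eps}$-dimensional query on $G_l$, with path $A_{x_v, z_v}$ in $T_l^*$ and query range obtained from $Q$ by deleting $[q_d, q_d']$ and inserting $[j_1, j_2]$ into the new small slot. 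I then recurse into the up to two boundary children. This yields one $\dimcl{d'-1}{d}{\eps}$-dimensional query per level, hence $\O(\lg n / \lg\lg n)$ queries in total, with $\O(1)$ navigation time per query; degenerate single-value-range cases at the bottom of the recursion are handled analogously to the binary case.

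The main obstacle is verifying correctness of the label-restricted aggregation. Because tree extraction preserves ancestor-descendant relations, $A_{x_v, z_v}$ in $T_l^*$ corresponds exactly to the nodes of $A_{x,z}$ whose $d'$-th weights lie in $v$'s range, and the level-$l$ label partitions these nodes precisely by their level-$(l+1)$ child subrange; so the label constraint $[j_1, j_2]$ selects exactly the middle-child portion, while the remaining coordinates of $Q$ pass through unchanged. The semigroup sum over the $\O(\lg n / \lg\lg n)$ queries therefore equals the answer to the original $\dimcl{d'}{d}{\eps}$-dimensional query.
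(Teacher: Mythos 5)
Your proposal is correct and takes essentially the same route as the paper's own proof: it builds the branching-factor-$\lceil\lg^{\eps}n\rceil$ range tree on the $d'$th weights, replaces the $d'$th coordinate of each node of $T_l^*$ by its $[f]$-label so that the block of child ranges fully contained in the query interval becomes a single range constraint $[i_v,j_v]$ on that label coordinate, and charges one $\dimcl{d'-1}{d}{\eps}$-dimensional query to each of the $\O(\lg n/\lg\lg n)$ visited range-tree nodes. The only blemishes are notational (writing $[q_d,q_d']$ where $[q_{d'},q_{d'}']$ is meant, and saying ``one query per level'' where up to two are needed once the search paths to $q_{d'}$ and $q_{d'}'$ diverge), neither of which affects correctness or the stated bounds.
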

\begin{proof}
  We define a conceptual range tree over the $d'$th weights of the nodes of $T$ and represent it using hierarchical tree extraction 
  as in \Cref{section:preliminaries}.
  For each level $l$ of the range tree, we define a tree $T_l^*$ with the same topology as $T_l$.
  We assign $\dimcl{d'-1}{d}{\eps}$-dimensional weight vectors and semigroup elements to each node, $x'$, in $T_l^*,$ as follows.
  If $x'$ is not the dummy root, 
  then $\pnt{w}(x')$ is set to be $(w_1(x),\ldots,w_{d'-1}(x),\lambda(T_l, x'),w_{d'+1}(x),\ldots,w_d(x))$,
  where $x$ is the corresponding node of $x'$ in $T,$
  and $\lambda(T_l, x')$ is the label assigned to $x'$ in $T_l$.
  We also set $g(x') = g(x)$. 
  If $x'$ is the dummy root, then its first $d'-1$ weights are $-\infty$ and last $d-d'+1$ weights 
  are $-\ceil{\lg^{\eps}}$, while $g(x')$ is set to an arbitrary element of the semigroup. 
  We further construct a data structure, $G_l$, of type $G^{(d'-1)}$, over $T_l^*$.
  The data structure $G^{(d')}$ then comprises the structures $T_l$ and $G_l$, over all $l.$ 
  The range tree has $\O(\lg n /\lg \lg n)$ levels and each $T_l^*$ has $n+1$ nodes, 
  and the structures $G_l$ are the $\O(\lg n /\lg \lg n)$ structures of type $G^{(d'-1)}$ referred to in the statement.
  As all the $T_l$s occupy $n+\smallO(n)$ words, $G^{(d')}$ occupies $\O(s(n)\lg n/\lg\lg n+n)$ words.

  Next we show how to use $G^{(d')}$ to answer queries.
  Let $P_{x,y}$ be the query path and $Q = \prod_{j=1}^{d}[q_j, q_j']$ be the query range.
  As discussed in the proof of \Cref{lem:reductionBinaryCase}, it suffices to describe
  the handling of the path $A_{x,z}$, where $z$ is the lowest common ancestor of $x$ and $y$.

  To answer the query on $A_{x,z}$, we perform a top-down traversal in the range tree to identify the up to two nodes 
  at each level representing ranges that contain at least one of $q_{d'}$ and $q_{d'}'$.
  For each node $v$ identified at each level $l$, we perform a $\dimcl{d'-1}{d}{\eps}$-dimensional 
  semigroup range sum query with parameters computed as follows:
    (i) the query path is $P_{x_v,z_v}$, where $x_v$ and $z_v$ are the nodes in $T_l$ corresponding to the $\mathscr{T}_v$-{\astralbody}s of $x$ and $z$; and
    (ii) the query range is $Q_v = [q_1, q_1']\times[q_2, q_2']\times\ldots\times[q_{d'-1}, q_{d'-1}']\times[i_v..j_v]\times[q_{d'+1}, q_{d'+1}']\times\ldots\times[q_d, q_d']$,
  such that the children of $v$ representing ranges that are entirely within $[q_{d'}, q_{d'}']$ 
    are children $i_v, i_v+1, \ldots, j_v$ (child $i$ refers to the $i$th child); no queries are performed if such children do not exist.
  The semigroup sum of these $\O(\lg n /\lg\lg n)$ queries is the answer to the original query.
  It remains to show that the parameters of each query are computed in $\O(1)$ time per query.
  By \Cref{section:preliminaries}, $i_v$ and $j_v$ are computed in $\O(1)$ time via simple arithmetic, which is sufficient to determine $Q_v$.
  Nodes $x_v$ and $z_v$ are computed in $\O(1)$ time each time we descend down a level in the range tree:
    Initially, when $v$ is the root of the range tree, $x_v$ and $z_v$ are nodes $x$ and $z$ in $T_1$.
    When we visit a child, $v_j$, of $v$ whose range contains at least one of $q_{d'}$ and $q_{d'}'$, 
    we compute (via \Cref{lemma:rangetreemapping}) $x_{v_j}$ as the node in $T_{l+1}$ 
    corresponding to the node $\mathtt{level\_anc}_j(T_l, x_v,1)$ in $T_l$, which uses constant time.
    Node $z_{v_j}$ is located similarly.
\end{proof}

\section{Ancestor Dominance Reporting}\label{section:ancestralReportingSection}
In \Cref{lemma:lemma3SidedReporting}
we solve the $\dimcl{1}{d}{\eps}$-dimensional {\textit{path dominance reporting problem}},
which asks one to enumerate the nodes in the query path whose weight vectors dominate the query vector. 
The strategy employed in \Cref{lemma:lemma3SidedReporting}
is that of zooming into the extraction dominating the query point in the last $(d-1)$ weights,
and therein reporting the relevant nodes based on the $1$st weight and tree topology only.
\begin{lemma}\label{lemma:lemma3SidedReporting}
    Let $d \geq 1$ be a constant integer and $0 < \eps < \frac{1}{d-1}$ be a constant number.
	A tree $T$ on $m \leq n$ nodes, 
	in which each node is assigned a $\dimcl{1}{d}{\eps}$-dimensional weight vector, 
    can be represented in $m+\smallO(m)$ words, so that a path dominance 
	reporting query can be answered in $\O(1+k)$ time, 
	where $k$ is the number of the nodes reported.
\end{lemma}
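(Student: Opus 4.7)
The plan is to exploit that the last $d-1$ coordinates come from the small universe $[\sigma]$ with $\sigma=\lceil\lg^{\eps}n\rceil$, so that the number $C=\sigma^{d-1}=\O(\lg^{\eps(d-1)}n)$ of possible ``last-$(d-1)$'' vectors is $o(\lg n)$ by the hypothesis $\eps(d-1)<1$. For every $\pnt{v}\in[\sigma]^{d-1}$ I would pre-extract the tree $\mathscr{T}_{\pnt{v}}$ consisting of the nodes of $T$ whose last $d-1$ weights dominate $\pnt{v}$ and equip it with a constant-time path-maximum oracle on the first weight. A query whose last part is $\pnt{q}'$ is then answered by projecting $P_{x,y}$ onto $\mathscr{T}_{\pnt{q}'}$ and enumerating the nodes of the projection whose first weight is at least $q_1$ via the standard divide-and-report recursion driven by the path-max oracle.

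For storage, alongside the explicit first weights (a total of $m\lceil\lg n\rceil$ bits, i.e.\ $m$ words), each $\pnt{v}$ would carry: the topology of $\mathscr{T}_{\pnt{v}}$ via \Cref{lemma:smallAlphabetTrees}; a copy of $T$ with a $0/1$-labelling indicating membership in $\mathscr{T}_{\pnt{v}}$, so that $\mathtt{level\_anc}_1$ returns the $\mathscr{T}_{\pnt{v}}$-\astralbody\ of an arbitrary node of $T$ in $\O(1)$ time; a length-$m$ bit-vector marking $\mathscr{T}_{\pnt{v}}$-membership with $\O(1)$ rank/select, letting $w_1$ of any node of $\mathscr{T}_{\pnt{v}}$ be fetched in $\O(1)$; and a path-maximum oracle on $\mathscr{T}_{\pnt{v}}$ built by \Cref{lemma:lemmaPathMinimum}(a) with parameter $m'=\Theta(|\mathscr{T}_{\pnt{v}}|\lg^{**}|\mathscr{T}_{\pnt{v}}|)$ to attain $\O(1)$ time and $\O(1)$ weight accesses per path-max query. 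Because every node of $T$ lies in at most $\prod_{i=2}^{d}w_i(x)\leq\sigma^{d-1}$ extractions, $\sum_{\pnt{v}}|\mathscr{T}_{\pnt{v}}|\leq m\sigma^{d-1}$, and the four components summed across $\pnt{v}$ cost $\O(m\sigma^{d-1}\lg^{**}n)$ bits in total. As $\sigma^{d-1}\lg^{**}n=o(\lg n)$ when $\eps(d-1)<1$, this is $o(m)$ words, so the entire structure occupies $m+o(m)$ words.

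For a query on $(P_{x,y},\pnt{q}=(q_1,\pnt{q}'))$, I would locate $\mathscr{T}_{\pnt{q}'}$ in $\O(1)$, compute $z=\mathtt{LCA}(x,y)$ and the $\mathscr{T}_{\pnt{q}'}$-\astralbody s $x^*,y^*,z^*$ of $x,y,z$. A short case analysis on whether $z\in\mathscr{T}_{\pnt{q}'}$ rewrites ``extracted nodes on $P_{x,y}$'' as two ancestor chains in $\mathscr{T}_{\pnt{q}'}$, emanating from $x^*$ and $y^*$ and terminating one step below $z^*$, together with $z^*$ itself if $z$ is extracted. On each chain I run the divide-and-report recursion: find the maximum-$w_1$ node $u$ via the path-max oracle; if $w_1(u)<q_1$ prune, otherwise report $u$ and recurse on the two sub-chains into which $u$ splits the chain (their endpoints obtained in $\O(1)$ via the level-ancestor operations of \Cref{lemma:smallAlphabetTrees}). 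Each reported node spawns at most two further $\O(1)$-time calls, so the query terminates in $\O(1+k)$ time.

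The hardest part is squeezing a genuinely $\O(1)$-time path-max oracle onto every one of the $n^{o(1)}$ extractions within the $m+o(m)$-word budget, which is all but exhausted by the first weights alone. It hinges on $\eps(d-1)<1$: this forces $\sigma^{d-1}=o(\lg n)$, keeping the cumulative $\O(m\sigma^{d-1}\lg^{**}n)$ bits of all the path-max structures, extraction topologies, bit-vectors, and $0/1$-labellings well below $m\lg n$; the choice $m'=\Theta(|\mathscr{T}_{\pnt{v}}|\lg^{**}|\mathscr{T}_{\pnt{v}}|)$ in \Cref{lemma:lemmaPathMinimum}(a) is exactly what keeps every path-max query at true $\O(1)$ time within this slack.
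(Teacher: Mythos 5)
Your proposal is correct and follows essentially the same route as the paper: one extraction per possible last-$(d-1)$ query vector, an $\O(1)$-time succinct path-maximum index on each extraction (Lemma~\ref{lemma:lemmaPathMinimum}(a) with the $\lg^{**}$ parameter), a single shared array of first weights, $0/1$-labelled copies of $T$ for \astralbody{} computation, and the standard $2k+1$-query divide-and-report recursion, with the space bounded exactly as in the paper via $\eps(d-1)<1$. The only detail you gloss over is how the $\O(\lg^{(d-1)\eps}n)$ per-vector structures are addressed within the $\smallO(m)$-word budget (the paper concatenates their encodings so each pointer costs only $\O(\lg(m\lg n))$ bits), but this is a minor implementation point.
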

\begin{proof}
  We represent $T$ using \Cref{lemma:smallAlphabetTrees}.
  For any $\dimcl{0}{d-1}{\eps}$-dimensional vector $\pnt{g}= (g_1,g_2,\ldots,g_{d-1})$,  
  we consider a conceptual scalarly-weighted tree $E_{\pnt{g}}$ by first extracting the node set 
  $G = \{x\,|\,x \in T \text{ and }\pnt{w}_{2,d}(x)\succeq\pnt{g}\}$ from $T$. 
  The weight of a non-dummy node in $E_{\pnt{g}}$ is the $1$st weight of its $T$-source. 
  If $E_{\pnt{g}}$ has a dummy root, then its weight is $-\infty$. 
  
  Instead of storing $E_{\pnt{g}}$ explicitly, we create the following structures, 
  the first two of which are built for any possible $\dimcl{0}{d-1}{\eps}$-dimensional vector $\pnt{g}$:

  \begin{itemize}
		  \item A $0/1$-labeled tree $T_{\pnt{g}}$ (using \Cref{lemma:smallAlphabetTrees}) with the topology of $T$, 
			  in which a node $u$ has label $1$ {\textit{iff}} $u$ is extracted when constructing $E_{\pnt{g}}$;
		  \item A succinct index $I_{\pnt{g}}$ for path maximum queries in $E_{\pnt{g}}$
			   (using \Cref{lemma:lemmaPathMinimum}(a)); 
		  \item An array $W_1$ where $W_1[x]$ stores the $1$st weight of the node $x$ in $T$; 
		  \item A table $C$ which stores pointers to $T_{\pnt{g}}$ and $I_{\pnt{g}}$ for each possible $\pnt{g}$.
  \end{itemize}

For any node $x'$ in $E_{\pnt{g}}$, its $T$-source $x$ can be computed using 
$x = \mathtt{pre\_select}_1(T_{\pnt{g}},x')$. 
Then, the weight of $x'$ is $W_1[x]$.
With this $\O(1)$-time access to node weights in $E_{\pnt{g}}$, by 
\Cref{lemma:lemmaPathMinimum}
we can use $I_{\pnt{g}}$ to answer path maximum queries in $E_{\pnt{g}}$ in $\O(1)$ time.

We now show how to answer a path dominance reporting query in $T$. 
Let $P_{x,y}$ and $\pnt{q} = (q_1,q_2,\ldots,q_d)$ 
be respectively the path and weight vector given as query parameters.
First, we use $C$ to locate $T_{\pnt{q'}}$ and $I_{\pnt{q'}}$, where $\pnt{q'} = \pnt{q}_{2,d}$.
As discussed in the proof of \Cref{lem:reductionBinaryCase}, it suffices to show how to answer the query
with $A_{x,z}$ as the query path, where $z = \mathtt{LCA}(T,x,y).$
To that end, we fetch the $T_{\pnt{q'}}$-{\astralbody}, $x'$, of $x,$ as 
$x' = \mathtt{pre\_rank}_{1}(T_{\pnt{q'}}, \mathtt{level\_anc}_{1}(T_{\pnt{q'}},x,1))$, and
analogously the {\astralbody}, $z'$, of $z.$ 
Next, $I_{\pnt{q'}}$ locates a node $t' \in A_{x',z'}$ with the maximum weight.
If the weight of $t'$ is less than $q_1$, then no node in $A_{x,y}$ can possibly have a weight vector dominating $\pnt{q}$, 
and our algorithm is terminated without reporting any nodes. 
Otherwise, the $T$-source $t$ of $t'$ is located as $t = \mathtt{pre\_select_1}(T_{\pnt{q'}},t').$
The node $t \in T$ then claims the following two properties:
(i) as $T_{\pnt{q'}}$ contains a node corresponding to $t,$ one has $\pnt{w}_{2,d}(t)\succeq\pnt{q'};$ and
(ii) as $w_1(t)$ equals the weight of $t',$ it is at least $q_1.$
We therefore have that $\pnt{w}(t) \succeq \pnt{q}$ and hence report $t$. 
Afterwards, we perform the same procedure recursively on paths $A_{x',t'}$ and $A_{s',z'}$ in $E_{\pnt{q'}}$, 
where $s' = \mathtt{pre\_rank}_{1}(T_{\pnt{q'}},\mathtt{level\_anc_1}(T_{\pnt{q'}},t,1)).$

To analyze the running time, the key observation is that we perform path maximum queries using $I_{\pnt{q'}}$ at most $2k+1$ times.
Since both each query itself and the operations performed to identify the query path use $\O(1)$ time, our algorithm runs in $\O(1+k)$ time. 

To analyze the space cost, we observe that $W_1$ occupies $m$ words. 
The total number of possible $\dimcl{0}{d-1}{\eps}$-dimensional vectors is $\O(\lg^{(d-1)\eps}{n})$.
Since each $T_{\pnt{g}}$ uses $\O(m)$ bits and each $I_{\pnt{g}}$ uses $\O(m\lg^{**}{m})$ bits, 
the total space space cost of storing $T_{\pnt{g}}$'s and  $I_{\pnt{g}}$'s for all possible $\pnt{g}$'s is 
$\O((m+m\lg^{**}{m})\lg^{(d-1)\eps}{n}) 
	= \O(m\lg^{**}{m}\lg^{(d-1)\eps}{n}) \le \O(m\lg^{**}{n}\lg^{(d-1)\eps}{n})= \smallO(m\lg n)$ 
bits for any constant $0 < \eps < 1/(d-1),$ which is $\smallO(m)$ words.
Furthermore, $C$ stores $\O(\lg^{(d-1)\eps}{n})$ pointers. To save the space cost of each pointer, 
we concatenate the encodings of 
all the $T_{\pnt{g}}$s and $I_{\pnt{g}}$s and store them in a memory block of $\smallO(m\lg{n})$ bits.
Thus, each pointer stored in $C$ can be encoded in $\O(\lg(m\lg{n}))$ bits,
and the table $C$ thus uses 
$\O((\lg{m}+\lg\lg{n})\log^{(d-1)\eps}{n}) = \O(\lg{m}\log^{(d-1)\eps}{n})+\O(\lg\lg{n}\log^{(d-1)\eps}{n}) 
    = \smallO(\lg{m}\lg{n})+\smallO(\lg{n}) = \smallO(\lg{m}\lg{n})$ bits for any constant $0 < \eps < 1/(d-1)$, which is $\smallO(\lg{m})$ words.
Finally, the encoding of $T$ using \Cref{lemma:smallAlphabetTrees} is $2m+\smallO(m)$ bits.
Therefore, the total space cost is $m+\smallO(m)$ words. 
\end{proof}

We next design a solution to the $2$-dimensional ancestor dominance reporting problem,
by first generalizing the notion of $2$-dominance in Euclidean space 
to weighted trees. More precisely, in a tree $T$ in which each node is assigned a $d$-dimensional weight vector, 
we say that a node $x$ {\textit{$2$-dominates}} another node $y$ {\textit{iff}} $x \in \mathcal{A}(y)$ and $w_1(x) \geq w_1(y)$.
Then a node $x$ is defined to be {\textit{$2$-maximal}} {\textit{iff}} no other node in $T$ $2$-dominates $x$. 

The following property is then immediate:
Given a set, $X$, of $2$-maximal nodes, let $T_X$ be the corresponding extraction from $T$.
Let the weight of a node $x' \in T_X$ be the $1$st weight of its $T$-source $x$.
Then, in any upward path of $T_X$, the node weights are strictly decreasing.
In such a tree as $T_X$, the {\textit{weighted ancestor problem}}~\cite{fm1996} is defined.
In this problem, one is given a weighted tree with monotonically decreasing
node weights along any upward path.
We preprocess such a tree to answer {\textit{weighted ancestor queries}}, which,
for any given node $x$ and value $\kappa$, 
ask for the highest ancestor of $x$ whose weight is at least $\kappa$.
Farach and Muthukrishnan~\cite{fm1996} presented an $\O(n)$-word solution that answers 
this query in $\O(\lg\lg{n})$ time, for an $n$-node tree weighted over $[n]$. 
With an easy reduction we can further achieve the following result:

\begin{lemma}\label{lemma:hpdOnMab}
  Let $T$ be a tree on $m \leq n$ nodes, in which each node is assigned a weight from $[n]$. 
  If the node weights along any upward path are strictly decreasing, then $T$ can be represented using
  $\O(m)$ words to support weighted ancestor queries in $\O(\lg\lg{n})$ time. 
\end{lemma}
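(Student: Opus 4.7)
The plan is to reduce the problem to the Farach--Muthukrishnan result~\cite{fm1996} by a universe reduction on the weights, coupled with a predecessor search structure to translate query values.

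Let $W \subseteq [n]$ be the set of distinct weights appearing in $T$; since each node contributes at most one weight value, $|W| \le m$. I would first build a conceptual tree $T'$ with the same topology as $T$ in which the weight of each node is replaced by the rank of its original weight within $W$. Since rank is order-preserving, the rank weights along any upward path in $T'$ are still strictly decreasing and lie in $[|W|] \subseteq [m]$. Applying Farach and Muthukrishnan's construction~\cite{fm1996} to $T'$, which is an $m$-node tree weighted over a universe of size at most $m$, yields an $\O(m)$-word data structure supporting weighted ancestor queries on $T'$ in $\O(\lg\lg m) = \O(\lg\lg n)$ time.

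Second, I need a device to translate an arbitrary query threshold $\kappa \in [n]$ to the corresponding rank in $W$. For this I would store $W$ in a y-fast trie (or any comparable predecessor/successor structure) over the universe $[n]$; such a structure uses $\O(m)$ words and supports successor queries in $\O(\lg\lg n)$ time. To answer a weighted ancestor query $(x, \kappa)$ on $T$, I first locate the smallest element of $W$ that is at least $\kappa$; if no such element exists, no ancestor satisfies the threshold and the query returns the dummy root (or \emph{null}). Otherwise, I take its rank $r$ in $W$ and invoke the Farach--Muthukrishnan structure on $(x, r)$ in $T'$. Since a node of $T'$ has rank weight at least $r$ iff its original weight in $T$ is at least $\kappa$, the answer on $T'$ is exactly the desired highest ancestor in $T$.

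The total space is $\O(m)$ words, as both the predecessor structure and the Farach--Muthukrishnan index use $\O(m)$ words each, and the topology of $T$ shared by $T'$ can be represented succinctly via \Cref{lemma:smallAlphabetTrees}. The total query time is $\O(\lg\lg n)$, dominated equally by the predecessor step and the weighted ancestor step. The only mild subtlety is ensuring that the Farach--Muthukrishnan bound, stated for $n$-node trees over universe $[n]$, applies uniformly when the tree has $m$ nodes and the reduced universe has size at most $m$; this follows directly from their construction since both their space and query time scale with the tree size rather than with the original universe $[n]$ once ranks are used internally.
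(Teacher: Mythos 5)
Your proposal is correct and follows essentially the same route as the paper's own proof: rank-reduce the weights to $[m]$, apply Farach--Muthukrishnan to the reduced tree, and use a $y$-fast trie on $W$ to translate the query threshold in $\O(\lg\lg n)$ time. Your explicit use of a successor query (smallest element of $W$ at least $\kappa$) and the accompanying iff-argument is in fact slightly more careful than the paper's phrasing, which loosely refers to a predecessor query.
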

\begin{proof}
Let $W$ be the set of weights actually assigned to the nodes of $T$. 
We replace the weight, $h$, of any node $x$ in $T$ 
by the rank of $h$ in $W$, which is in $[m]$. 
We then represent the resulting tree $T'$ in $\O(m)$ words to support 
a weighted ancestor query in $T'$ in $\O(\lg\lg{m})$ time~\cite{fm1996}.
We also construct a $y$-fast trie~\cite{journals/ipl/Willard83}, $Y$, on the elements of $W$; 
the rank of each element is also stored with this element in $Y$. $Y$ uses $\O(m)$ space.
Given a weighted ancestor query over $T$, we first find the rank, $\kappa$, of the query weight in $W$ in 
$\O(\lg\lg{n})$ time by performing a predecessor query in $Y$, and $\kappa$ is further 
used to perform a query in $T'$ to compute the answer. 
\end{proof}

To design our data structures, we define a conceptual range tree with 
branching factor $f = \ceil{\lg^{\eps}{n}}$ over the $2$nd weights 
of the nodes in $T$ 
and represent it using hierarchical tree extraction as in \Cref{section:preliminaries}.
Let $v$ be a node in this range tree. In $\mathscr{T}_v$, we assign to each node the weight vector of the 
$T$-source and call the resulting weighted tree $T(v)$. 
We then define $M(v)$ as follows: If $v$ is the root of the range tree, then $M(v)$ is the set of all the $2$-maximal nodes in $T$. 
Otherwise, let $u$ be the parent of $v$. Then a node, $t$, of $T(v)$ is in $M(v)$ {\textit{iff}} 
$t$ is $2$-maximal in $T(v)$ and its corresponding node in $T(u)$ is not $2$-maximal in $T(u)$.
Thus, for any node $x$ in $T$, there exists a unique node $v$ in the range tree such that there is a node in $M(v)$ corresponding to $x$. 

We further conceptually extract two trees from $T_l:$ 
(i) $M_l$ is an extraction from $T_l$ of the node set 
$\{x\,|\,x \in T_l \text{ and there exists a node } u \text{ at level } l \text{ of the range tree, s.t. } x \text{ has a }$\\
$\text{corresponding node in } M(u)\}$; while 
(ii) $N_l$ is an extraction from $T_l$ of the node set 
$\{x\,|\,x \in T_l \text{ and } \exists \text{ a node } v \text{ at level } l+1 \text{ of the range tree, s.t. } x \text{ has a corresponding node} \in M(v)\}$. 
$T_M(v)$ is the tree formed by extracting $M(v)$ from $T(v)$, and $T_G(v)$ is the tree formed by extracting from $T(v)$ 
the node set 
$\{x\,|\, x \in T(v) \text{ and there exists a child, } t \text{ of } v $\\
$\text{s.t. there is a node in } M(t) \text{ corresponding to } x\}$. 
Then, for each level $l$, we also create the following data structures 
(when defining these structures, we assume that the root, $r_l$, of $T_l$ corresponds to a 
dummy node $\eta$ in $T$ with weight vector $(-\infty,-\infty)$; 
the node $\eta$ is omitted when determining the rank space, preorder ranks, and depths in $T$):
\begin{itemize}
    \item $D_l$, a $1$-dimensional path dominance reporting structure (using \Cref{lemma:lemma3SidedReporting})
        over the tree obtained by assigning weight vectors to the nodes of $M_l$ as follows: 
        each node $x'$ of $M_l$ is assigned a scalar weight $w_2(x)$, 
        where $x$ is the node of $T$ corresponding to $x';$ 
\item $E_l$, a $1$-dimensional path dominance reporting structure (using \Cref{lemma:lemma3SidedReporting})
    over the tree obtained by assigning weight vectors to the nodes of $M_l$ as follows: each node $x'$ of $M_l$ is assigned 
    a scalar weight $w_1(x)$, where $x$ is the node of $T$ corresponding to $x';$
\item $F_l$, a $\dimcl{1}{2}{\eps}$-dimensional path dominance reporting structure (using \Cref{lemma:lemma3SidedReporting}) 
      over the tree obtained by assigning weight vectors to the nodes of $N_l$ as follows: each node $x'$ of $N_l$ is 
      assigned $(w_1(x),\kappa)$, where $x$ is the node of $T$ corresponding to $x'$, 
      and $\kappa$ is the label assigned to the node in $T_l$ corresponding to $x'$; 
\item $A_l$, a weighted ancestor query structure over $M_l$ (using \Cref{lemma:hpdOnMab}), when its nodes are assigned the $1$st weights of the corresponding nodes in $T$; 
\item $T_l'$, a $0/1$-labeled tree (using \Cref{lemma:smallAlphabetTrees}) with the topology of $T_l$, and a node is assigned $1$ {\textit{iff}} it is extracted when constructing $M_l;$
\item $T_l''$, a $0/1$-labeled tree (using \Cref{lemma:smallAlphabetTrees}) with the topology of $T_l$, and a node is assigned $1$ {\textit{iff}} it is extracted when constructing $N_l;$
\item $P_l$, an array where $P_l[x]$ stores the preorder number of the node in $T$ corresponding to a node $x$ in $M_l$. 
\end{itemize}
We now describe the algorithm for answering queries, and analyze its running time
and space cost:
\begin{lemma}\label{lemma:ancesto2dBinaryCase}
  A tree $T$ on $n$ nodes, in which each node is assigned a $2$-dimensional weight vector, 
  can be represented in $\O(n)$ words, so that an ancestor dominance 
  reporting query can be answered in $\O(\lg{n}+k)$ time, 
  where $k$ is the number of the nodes reported.
\end{lemma}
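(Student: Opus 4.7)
The plan is to describe and analyze a query algorithm that descends the conceptual range tree on $w_2$ along the root-to-leaf path for $q_2$. Let $v_l$ denote the level-$l$ path node and $j^*$ the child index of $v_l$ whose range contains $q_2$. The design rests on two combinatorial facts: every node of $T$ lies in a unique $M(v)$, giving a disjoint decomposition of the reported ancestors; and within each $T(v)$, the $2$-maximal nodes have strictly decreasing $w_1$ going upward, so $T_M(v)$ satisfies the hypothesis of \Cref{lemma:hpdOnMab}. Consequently, every reportable ancestor $y$, labelled by the unique $v_y$ with $y\in M(v_y)$, either lies on the $q_2$ path or strictly within the subtree of a canonical right-sibling of a path node.

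At each path node $v_l$, the algorithm performs two reporting stages. Stage one collects ancestors $y\in M(v_l)$: I would translate $x$ into its $M_l$-view using $T_l'$ and $P_l$ (updated in $O(1)$ per level by \Cref{lemma:rangetreemapping}), issue a $1$D path dominance query on $E_l$ with threshold $q_1$ that yields in $O(1+k')$ time all $M_l$-ancestors with $w_1\ge q_1$, and then filter the output by the $w_2\ge q_2$ test and by containment in $v_l$'s $T_l$-subtree. Stage two collects ancestors $y\in M(u)$ for $u$ a canonical right-sibling of $v_l$'s $j^*$-th child: I would issue a single $(1,2,\eps)$-dimensional path dominance query on $F_l$ from $x$'s $N_l$-view with thresholds $(q_1,j^*+1)$, which by \Cref{lemma:lemma3SidedReporting} runs in $O(1+k')$ time and recovers the ancestors newly $2$-maximal at level $l{+}1$ inside the canonical sibling subtrees. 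Ancestors whose $v_y$ is strictly deeper within a canonical sibling's subtree are captured by recursively applying the same two stages inside that subtree with $q_2$ effectively $-\infty$; a recursive branch is taken only when its output is nonempty, so the total extra cost is amortized by $k$.

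The main obstacle is the combined correctness and amortization argument. Correctness is handled by the unique-membership property of the $M(v)$'s plus the case split on whether $v_y$ lies on the $q_2$ path or within a canonical-sibling subtree, which guarantees both exhaustive coverage and no double-counting. For the time bound I would combine $O(\log n/\log\log n)$ for the descent along the $q_2$ path (via \Cref{lemma:rangetreemapping}), $O(1+k_l)$ per level for the two output-sensitive queries on $E_l$ and $F_l$ (via \Cref{lemma:lemma3SidedReporting}), and $O(1)$ amortized per reported ancestor for the recursive branches, which totals $O(\log n + k)$. For the linear space bound I would exploit $\sum_l |M_l|=\sum_l |N_l|=n$ (from the uniqueness of the $v_y$-assignment) so that the $M_l$- and $N_l$-based structures $D_l$, $E_l$, $A_l$, and $F_l$ sum to $O(n)$ words overall; the $O(n)$-bit $0/1$-labeled trees $T_l'$ and $T_l''$ contribute $o(n)$ words over the $O(\log n/\log\log n)$ levels, and the $P_l$ arrays likewise fit within $O(n)$ words in total.
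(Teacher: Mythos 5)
Your overall architecture --- the range tree on $w_2$, the unique assignment of each node to some $M(v)$, the monotonicity of $w_1$ along upward paths of $2$-maximal nodes, the $E_l$/$F_l$ queries for off-path canonical nodes, and the ``visit a child only if it contains a reportable node'' amortization --- matches the paper's proof. But your Stage one, the reporting of ancestors in $M(\pi_l)$ for nodes $\pi_l$ \emph{on} the $q_2$-path, has a genuine flaw: you issue a $1$D path dominance query keyed on $w_1$ with threshold $q_1$ and then \emph{post-filter} the output by the test $w_2 \geq q_2$. The candidates produced by that query are the $M_l$-ancestors of $x$ with $w_1 \geq q_1$, and their number is not bounded by the size $k$ of the final answer. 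A concrete bad case: $T$ is a single path of $n$ nodes with $w_1$ strictly increasing downward (so every node is $2$-maximal and $M(\pi_1)$ is all of $T$) and every $w_2$ below $q_2$; with $q_1 = 1$ your level-$1$ query returns all $n$ nodes and the filter discards them all, giving $\Omega(n)$ time while $k = 0$. This breaks the $\O(\lg n + k)$ bound.

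The paper's proof avoids this by reversing the order of the two filters on the path $\Pi$, and this is exactly why the structures $A_l$ and $D_l$ exist (you list them in your space accounting but never invoke them in the query). Because $w_1$ is strictly decreasing along any upward path in $M_l$, a single \emph{weighted ancestor} query on $A_l$ (\Cref{lemma:hpdOnMab}, $\O(\lg\lg n)$ time) finds the highest ancestor $y$ of $x_v'$ with $w_1 \geq q_1$; every node on $P_{x_v', y}$ then already satisfies the $w_1$ constraint. One then runs the output-sensitive $1$D dominance query on $D_l$ --- which is keyed on $w_2$, not $w_1$ --- over the truncated path $P_{x_v',y}$ with threshold $q_2$, so every candidate it touches is actually reported. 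The $\O(\lg\lg n)$ weighted ancestor queries over the $\O(\lg n/\lg\lg n)$ levels of $\Pi$ are also precisely where the $\O(\lg n)$ additive term in the query time comes from; your accounting of $\O(\lg n/\lg\lg n)$ for the descent is an artifact of having omitted them. The rest of your argument (off-path levels need no $w_2$ test by the visiting invariant, the $F_l$ queries locate which children to recurse into, and the space bounds via $\sum_l |M_l| = n + \O(\lg n/\lg\lg n)$) is sound and agrees with the paper.
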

\begin{proof}
  Let $x$ and $\pnt{q} = (q_1,q_2)$ be the node and weight vector given as query parameters, respectively.
  We define $\Pi$ as the path in the range tree between and including the root and the leaf storing $q_2.$
  Let $\pi_l$ denote the node at level $l$ in this path. Then the root of the range tree is $\pi_1$.
  To answer the query, we perform a traversal of a subset of the nodes of the range tree, starting from $\pi_1.$
  The invariant maintained during this traversal is that a node $u$ of the range tree is visited {\textit{iff}} one of the following two conditions holds: 
  (i) $u = \pi_l$ for some $l;$ or (ii) $M(u)$ contains at least one node whose corresponding node in $T$ must be reported. 
  We now describe how the algorithm works when visiting a node, $v$, at level $l$ of this range tree, 
  during which we will show how the invariant is maintained. 
  Let $x_v$ denote the node in $T_l$ that corresponds to the $\mathscr{T}_{v}$-{\astralbody} of $x$; $x_v$ can be located in 
  constant time each time we descend down one level in the range tree, as described in the proof of \Cref{lem:reductionBinaryCase}.
  Our first step is to report all the nodes in the answer to the query that have corresponding nodes in $M(v)$. 
  There are two cases depending on whether $v = \pi_l$; this condition can be checked in constant time by determining whether
  $q_2$ belongs to the range represented by $v$. In either of these cases, we first
  locate the $M_l$-{\astralbody}, $x_v'$, of $x_v$ by computing 
  $x_v' = \mathtt{pre\_rank}_1(T_l', \mathtt{level\_anc}_1(T_l',x_v,1))$.

  If (i) holds, then the non-dummy ancestors of $x_v'$ in $M_l$ correspond to all the ancestors of $x$ in $T$ that have corresponding nodes in $M(v)$. 
  We then perform a weighted ancestor query using $A_l$ to locate the highest ancestor, $y$, of $x_v'$ in $M_l$ whose $1$st weight is at least $q_1$. 
  Since the $1$st weights of the nodes along any upward path in $M_l$ are decreasing, the $1$st weights of the nodes in path $P_{x_v', y}$ are  
  greater than or equal to $q_1$, while those of the proper ancestors of $y$ are strictly less. 
  Hence, by performing a $1$-dimensional path dominance reporting query in $D_l$ using $P_{x_v',y}$ as the query path and 
  $\pnt{q'} = (q_2)$ as the query weight vector, we can find all the ancestors of $x_v'$ whose corresponding nodes in $T$ have weight vectors dominating $\pnt{q}$. 
  Then, for each of these nodes, we retrieve from $P_l$ its corresponding node in $T$ which is further reported. 
  
  IF $v \neq \pi_l$, the maintained invariant guarantees that the $2$nd weights of the nodes in $M(v)$ are greater than $q_2$. 
  Therefore, by performing a $1$-dimensional path dominance reporting query in $E_l(s)$ using the path between (inclusive) $x_v'$ and the root of $M_l$ as the query path and 
  $\pnt{q''} = (q_1)$ as the query weight vector, we can find all the ancestors of $x_v'$ in $M_l$ whose corresponding nodes in $T$ have weight vectors dominating $\pnt{q}$. 
  By mapping these nodes to nodes in $T$ via $P_l$, we have reported all the nodes in the answer to the query that have corresponding nodes in $M(v)$. 

  After we handle both cases, the next task is to decide which children of $v$ we should visit. 
  Let $v_i$ denote the $i$th child of $v$. 
  We always visit $\pi_{l+1}$ if it happens to be a child of $v$.
  To maintain the invariant, for any other child $v_i$, we visit it {\textit{iff}} there exists at least one node in $M(v_i)$ 
  whose corresponding node in $T$ should be reported. 
  To find the children that we will visit, we locate the $N_l$-{\astralbody}, $x_v''$, of $x_v$ by computing 
  $x_v'' = \mathtt{pre\_rank}_1(T_l'', \mathtt{level\_anc}_1(T_l'',x_v,1))$.
  Then the non-dummy ancestors of $x_v''$ correspond to all the ancestors of $x$ in $T$ that have corresponding nodes in $\cup_{i=1, 2, \ldots}M(v_i)$. 
  We then perform a $\dimcl{1}{2}{\eps}$-dimensional path dominance reporting query in $F_l$ using the path between (inclusive) $x_v''$ and the root of $N_l$ 
  as the query path and $(q_1,\kappa+1)$ as the query weight vector if $\pi_{l+1}$ is the $\kappa$th child of $v$, and we set $\kappa = 0$ if $\pi_{i+1}$ is not a child of $v$. 
  For each node, $t$, returned when answering this query, if its $2$nd weight in $F_l$ is $j$, then $t$ corresponds to a node in $M(v_j)$. 
  Since the node corresponding to $t$ in $T$ should be included in the answer to the original query, we iteratively visit $v_j$ if we have not visited it before 
  (checked e.g. using an $f$-bit word to flag the children of $v$). 
  
  The total query time is dominated by the time used to perform queries using $A_l$, $D_l$, $E_l$ and $F_l$.
  We only perform one weighted ancestor query when visiting each $\pi_l$,  and this query is not performed when visiting other nodes of the range tree.
  Given the $\O(\lg n/\lg\lg n)$ levels of the range tree, all the weighted ancestor queries collectively use $\O(\lg\lg n \times (\lg n/\lg\lg n)) = \O(\lg n)$ time.
  Similarly, we perform one query using $D_l$ at each level of the range tree, and the query times summed over all levels is $\O(\lg n/\lg\lg n + k)$.
  Our algorithm guarantees that, each time we perform a query using $E_l$, we report a not-reported hitherto, non-empty subset of the nodes in the answer to the original query. 
  Therefore, the queries performed over all $E_l$'s use $\O(k)$ time in total. Querying the $F_l$-structures incurs $\O(k)$ time cost when visiting nodes not in $\Pi$, 
  and $\O(\lg n/\lg\lg n + k)$ time when visiting nodes in $\Pi$.
  Thus, the query times spent on all these structures throughout the execution of the algorithm sum up to $\O(\lg{n}+k)$.   

  We next analyze space cost of our data structures.
  As mentioned in \Cref{section:preliminaries}, all the $T_l$s occupy $n + \smallO(n)$ words.
  By \Cref{lemma:smallAlphabetTrees}, each $T_l'$ or $T_l''$ uses $3n + \smallO(n)$ bits, so over all $\lg n /\lg\lg n$ levels, 
  they occupy $\O(n\lg n/\lg\lg n)$ bits, which is $\O(n/\lg\lg n)$ words.
  As discussed earlier, we know that, for any node $x$ in $T$, there exists one and only one node $v$ 
  in the range tree such that there is a node in $M(v)$ corresponding to $x$.
  Furthermore, $M(v)$s only contain nodes that have corresponding nodes in $T.$
  Therefore, the sum of the sizes of all $M(v)$s is exactly $n$.
  Hence all the $P_l$'s have $n$ entries in total and thus uses $n$ words.
  By \Cref{lemma:lemma3SidedReporting}, the size of each $D_l$ in words is linear in the number of nodes in $M_l$.
  The sum of the numbers of nodes in $M_l$s over all levels of the range tree is equal to the sum of the sizes of all $M(v)$s 
  plus the number of dummy roots, which is $n + \O(\lg n/\lg\lg n)$.
  Therefore, all the $D_l$s occupy $\O(n)$ words.
  By similar reasoning, all the $E_l$s and $A_l$s occupy $\O(n)$ words in total.
  Finally, it is also true that, for any node $x$ in $T$, there exists a unique node $v$ in the range tree 
  such that there is a node in $N(v)$ corresponding to $x$.
  Thus, we can upper-bound the total space cost of all the $F_l$s by $\O(n)$ words in a similar way.
  All our data structures, therefore, use $\O(n)$ words.
\end{proof}

Further, we describe the data structure for $\dimcl{2}{d}{\eps}$-dimensional
ancestor dominance reporting, and analyze its time- and space-bounds:
\begin{lemma}\label{lemma:ancesto2de}
  Let $d \geq 2$ be a constant integer and $0 < \eps < \frac{1}{d-2}$ be a constant number.
  A tree $T$ on $n$ nodes, in which each node is assigned a $\dimcl{2}{d}{\eps}$-dimensional weight vector, 
  can be represented in $\O(n\lg^{(d-2)\eps}{n})$ words, so that an ancestor dominance 
  reporting query can be answered in $\O(\lg{n}+k)$ time, 
  where $k$ is the number of the nodes reported.
\end{lemma}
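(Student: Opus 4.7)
The plan is to extend Lemma \ref{lemma:ancesto2dBinaryCase} by a universe reduction over the last $d-2$ small weights, mirroring the enumeration strategy employed in Lemma \ref{lemma:lemma3SidedReporting}. Since each of the last $d-2$ components lies in $[\ceil{\lg^{\eps}{n}}]$, there are only $\O(\lg^{(d-2)\eps}{n})$ possible small vectors $\pnt{g} = (g_3,g_4,\ldots,g_d)$. For each such $\pnt{g}$, I would form the node set $G_{\pnt{g}} = \{x \in T \,|\, \pnt{w}_{3,d}(x) \succeq \pnt{g}\}$, conceptually extract a tree $T(\pnt{g})$ on these nodes with the first two weights of their $T$-sources, and build the $2$-dimensional ancestor dominance reporting structure of Lemma \ref{lemma:ancesto2dBinaryCase} over $T(\pnt{g})$. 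In parallel, as in the proof of Lemma \ref{lemma:lemma3SidedReporting}, I would maintain a $0/1$-labeled tree $T_{\pnt{g}}$ with the topology of $T$ (via Lemma \ref{lemma:smallAlphabetTrees}) marking the nodes of $G_{\pnt{g}}$, together with a pointer table $C$ indexed by $\pnt{g}$ that resolves to the pair $(T_{\pnt{g}}, \mathit{struct}(\pnt{g}))$.

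To answer a query at node $x$ with vector $\pnt{q} = (q_1,\ldots,q_d)$, I would first compute $\pnt{q'} = \pnt{q}_{3,d}$ and fetch the relevant structures through $C$. The $T(\pnt{q'})$-\astralbody{} $x'$ of $x$ is obtained in $\O(1)$ time by $x' = \mathtt{pre\_rank}_1(T_{\pnt{q'}},\mathtt{level\_anc}_1(T_{\pnt{q'}},x,1))$, exactly as in Lemma \ref{lemma:lemma3SidedReporting}. I would then invoke the $2$-dimensional ancestor dominance query of Lemma \ref{lemma:ancesto2dBinaryCase} on $T(\pnt{q'})$ with query node $x'$ and weight vector $(q_1,q_2)$, and map each reported node $t'$ back to its $T$-source via $\mathtt{pre\_select}_1(T_{\pnt{q'}},t')$. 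Correctness follows because the ancestors of $x'$ in $T(\pnt{q'})$ are in bijection with the ancestors of $x$ in $T$ whose last $d-2$ weights already dominate $\pnt{q'}$, and the $2$-dimensional query then filters these by $(q_1,q_2)$.

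For the analysis, each of the $\O(\lg^{(d-2)\eps}{n})$ structures from Lemma \ref{lemma:ancesto2dBinaryCase} consumes $\O(n)$ words, yielding an aggregate of $\O(n\lg^{(d-2)\eps}{n})$ words. The $0/1$-labeled trees $T_{\pnt{g}}$ together take $\O(n\lg^{(d-2)\eps}{n})$ bits, which is a lower-order term. The pointer table $C$ is handled exactly as in Lemma \ref{lemma:lemma3SidedReporting}, where the hypothesis $\eps < 1/(d-2)$ ensures that its cost remains sublogarithmic in words. Since the query invokes only one $2$-dimensional ancestor dominance query together with $\O(1)$-time \astralbody{} computations and source look-ups per reported node, the query time is $\O(\lg n + k)$.

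The main obstacle I anticipate is verifying that Lemma \ref{lemma:ancesto2dBinaryCase}, as stated for an $n$-node tree, applies cleanly to each extracted tree $T(\pnt{g})$ of $m \leq n$ nodes whose weights are still drawn from $[n]$, yielding $\O(m)$ words of space. This should follow from inspection of its proof, since every internal building block---the conceptual range tree over the second weight, Lemma \ref{lemma:hpdOnMab} for weighted ancestors, and Lemma \ref{lemma:lemma3SidedReporting} for the $1$-dimensional path dominance substructures---already scales linearly in the number of input nodes with weights from $[n]$. Once this is confirmed, the space and time bounds match the claim.
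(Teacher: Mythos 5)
Your proposal is correct and follows essentially the same route as the paper: enumerate all $\O(\lg^{(d-2)\eps}n)$ small vectors $\pnt{g}$, extract $E_{\pnt{g}}$ on the nodes with $\pnt{w}_{3,d}\succeq\pnt{g}$, build the structure of \Cref{lemma:ancesto2dBinaryCase} on each, and route a query to $E_{\pnt{q}_{3,d}}$ via the $0/1$-labeled tree and $\mathtt{pre\_rank}_1/\mathtt{level\_anc}_1$. The one obstacle you flag at the end is resolved in the paper not by re-inspecting \Cref{lemma:ancesto2dBinaryCase} but by simply converting the weights of each $E_{\pnt{g}}$ to rank space $[n_{\pnt{g}}]$ (keeping global arrays $W_1,W_2$ for weight access) and reducing $(q_1,q_2)$ to that rank space at query time, so the lemma applies verbatim with $n:=n_{\pnt{g}}$.
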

\begin{proof}
    In our design, for any $\dimcl{0}{d-2}{\eps}$-dimensional vector $\pnt{g},$ 
    we consider a conceptual scalarly-weighted tree 
    $E_{\pnt{g}}$ as the tree extraction from $T$ of the node set $\{x\,|\,x \in T \text{ and } \pnt{w}_{3,d}(x)\succeq\pnt{g}\}.$
    The weight of a node $x'$ in $E_{\pnt{g}}$ is the $2$-dimensional weight vector $\pnt{w}_{1,2}(x),$ 
    where $x$ the $T$-source of $x'.$
    If $E_{\pnt{g}}$ has a dummy root, then its weight is $(-\infty,-\infty).$ 
    Rather than storing $E_{\pnt{g}}$ explicitly, we follow the strategy in the proof of \Cref{lemma:lemma3SidedReporting} 
    and store a $0/1$-labeled tree $T_{\pnt{g}}$ for each possible $\pnt{g}$.
    $T_{\pnt{g}}$ is obtained from $T$ by assigning $1$-labels to the nodes of $T$ extracted when constructing $E_{\pnt{g}}$.
    We also maintain arrays $W_1$ and $W_2$ storing respectively the $1$st and $2$nd  weights of all nodes of $T,$ in preorder,
    which enables accessing the weight of an arbitrary node of $E_{\pnt{g}}$ in $\O(1)$ time. 
    Let $n_{\pnt{g}}$ be the number of nodes in $E_{\pnt{g}}.$ We convert the node weights of each $E_{\pnt{g}}$ to rank space $[n_{\pnt{g}}].$
    For each such $E_{\pnt{g}},$ we build the $2$-dimensional ancestor dominance reporting data structure,
    $V_{\pnt{g}}$, from \Cref{lemma:ancesto2dBinaryCase}.
    Thus, the space usage of the resulting data structure is upper-bounded by $\O(n\lg^{(d-2)\eps}{n})$ words.

    Let $x$ and $\pnt{q} = (q_1,q_2,\ldots,q_d)$ be the node and weight vector given as query parameters, respectively.
    In $\O(1)$ time, we fetch the data structures pertaining to the range $\pnt{q'} = \pnt{q}_{3,d};$
    this way, all the weights $3$ through $d$ of the query vector have been taken care of, and all we need
    to consider is the tree topology and the first two weights, $q_1$ and $q_2,$ of the original query vector.
    We localize the query node $x$ to $E_{\pnt{q'}}$ via $x' = \mathtt{pre\_rank}_{1}(T_{\pnt{q'}},\mathtt{level\_anc}_1(T_{\pnt{q'}},x,1)),$
    and launch the query in $V_{\pnt{q'}}$ with $x'$ as a query node, having
    reduced the components of the query vector $(q_1,q_2)$ to the rank space of $E_{\pnt{q'}}$
    (the time- and space-bounds for the reductions are absorbed in the final bounds).
\end{proof}

Instantiating \Cref{section:masterLemmaBinaryCase} with $g(x) = \{x\}$ and the semigroup sum operator $\oplus$ as the set-theoretic union operator $\cup,$
\Cref{lem:reductionBinaryCase} iteratively applied to \Cref{lemma:ancesto2dBinaryCase} yields
\begin{theorem}\label{theorem:pathDominanceReportingBinaryCase}
  Let $d \geq 2$ be a constant integer.
  A tree $T$ on $n$ nodes, in which each node is assigned a $d$-dimensional weight vector,  
  can be represented in $\O(n\lg^{d-2}{n})$ words, so that an ancestor dominance 
  reporting query can be answered in $\O(\lg^{d-1}n+k)$ time, 
  where $k$ is the number of the nodes reported.
\end{theorem}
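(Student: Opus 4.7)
The plan is to proceed by induction on $d$, with the base case $d=2$ supplied directly by \Cref{lemma:ancesto2dBinaryCase}, which gives $O(n)$ words of space and $O(\lg n + k)$ query time. For the inductive step I would cast ancestor dominance reporting in $d$ dimensions as an instance of the $d$-dimensional semigroup path sum problem from \Cref{section:masterLemmaBinaryCase}: set $g(x) = \{x\}$ for every node $x$ and take the semigroup operator $\oplus$ to be set-theoretic union. Under this instantiation, the semigroup sum $\bigcup_{z \in P_{x,y},\, \pnt{w}(z) \in Q} \{z\}$ is precisely the set of path nodes whose weight vectors lie in $Q$; with $y$ the root and $q_i' = +\infty$, this is exactly the required set of dominating ancestors.

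Next I would invoke \Cref{lem:reductionBinaryCase} with the inductive hypothesis playing the role of $G^{(d-1)}$. Writing $s_d(n)$ for the space bound in words, the lemma yields the recurrence $s_d(n) = O(s_{d-1}(n) \lg n + n)$ with base case $s_2(n) = O(n)$, which unrolls to $s_d(n) = O(n \lg^{d-2} n)$, matching the claim. The $O(1)$ cost per sub-query for determining the sub-query parameters is absorbed into the total since the total number of sub-queries at any level is $O(\lg n)$.

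For the query time, I would argue by induction that $t_d(n,k) = O(\lg^{d-1} n + k)$. \Cref{lem:reductionBinaryCase} decomposes a $d$-dimensional query into $O(\lg n)$ $(d-1)$-dimensional sub-queries whose answers are combined by semigroup sum, which here is disjoint union. The crucial observation is that these sub-queries produce pairwise disjoint output sets: the range-tree decomposition assigns each node of $T$ to a unique sub-range of the $d$-th coordinate, so each reported node contributes to exactly one sub-query. Writing $k_i$ for the output size of the $i$-th sub-query, we have $\sum_i k_i = k$, and summing the inductive bound $t_{d-1}(n,k_i) = O(\lg^{d-2} n + k_i)$ over the $O(\lg n)$ sub-queries gives
\[
t_d(n,k) \;=\; \sum_i O(\lg^{d-2} n + k_i) \;=\; O(\lg^{d-1} n + k).
\]

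The main subtlety, and the reason the naive multiplicative bound of $t_{d-1}(n)\cdot\lg n$ (explicitly cautioned against in the footnote of \Cref{lem:reductionBinaryCase}) does not suffice, is precisely this accounting: only by exploiting the disjointness of the sub-query outputs does the additive $+k$ term survive the recursion without picking up an extra $\lg n$ factor per dimension. Verifying this disjointness rigorously from the construction in the proof of \Cref{lem:reductionBinaryCase} — namely that the children of each range-tree node represent disjoint sub-ranges that jointly cover the query range in the splitting coordinate — is the main point to check, after which the recurrences for space and time close in routine fashion.
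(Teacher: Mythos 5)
Your proposal is correct and matches the paper's own argument: the paper likewise obtains \Cref{theorem:pathDominanceReportingBinaryCase} by instantiating the semigroup path sum framework with $g(x)=\{x\}$ and $\oplus=\cup$, and iteratively applying \Cref{lem:reductionBinaryCase} to the base case of \Cref{lemma:ancesto2dBinaryCase}. Your explicit accounting of the disjoint output sizes $k_i$ across the $\O(\lg n)$ sub-queries is exactly the reason the lemma is phrased in terms of the number of sub-queries rather than a multiplicative time bound, as its footnote hints.
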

Analogously, \Cref{lem:reductionGeneral} (\Cref{section:masterLemmaGeneral})
that is the counterpart of \Cref{lem:reductionBinaryCase} when the range tree
has a non-constant branching factor $f = \O(\lg^{\eps}{n})$,
with \Cref{lemma:ancesto2de} 
which addresses $\dimcl{2}{d}{\eps}$-dimensional ancestor reporting,
together yield a different tradeoff:
\begin{theorem}\label{theorem:pathDominanceReportingGeneralCase}
  Let $d \geq 3$ be a constant integer.
  A tree $T$ on $n$ nodes, in which each node is assigned a $d$-dimensional weight vector,  
  can be represented in $\O(n\lg^{d-2+\eps}{n})$ words of space, so that an ancestor dominance 
  reporting query can be answered in $\O((\lg^{d-1} n)/(\lg\lg n)^{d-2}+k)$ time, 
  where $k$ is the number of the nodes reported. Here, $\eps \in (0,1)$ is 
  a constant.
\end{theorem}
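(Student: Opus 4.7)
The plan is to obtain the claimed data structure by combining \Cref{lemma:ancesto2de} as a base case for $d'=2$ with iterative applications of \Cref{lem:reductionGeneral} to lift the solution through $d'=3,4,\ldots,d$. As in \Cref{theorem:pathDominanceReportingBinaryCase}, I would recast ancestor dominance reporting as a $d$-dimensional semigroup path sum problem by setting $g(x)=\{x\}$ and taking $\oplus$ to be the set-theoretic union $\cup$, so that the semigroup sum over the selected nodes is exactly the reported set.

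First I would fix a constant $\eps'\in(0,1)$ satisfying both $\eps'<1/(d-2)$ (so that \Cref{lemma:ancesto2de} is applicable) and $(d-2)\eps'<\eps$; for example $\eps'=\min(1,\eps)/(2(d-2))$. Then \Cref{lemma:ancesto2de} yields a $\dimcl{2}{d}{\eps'}$-dimensional ancestor dominance reporting structure in $\O(n\lg^{(d-2)\eps'}n)$ words with query time $\O(\lg n+k)$. I would then invoke \Cref{lem:reductionGeneral} iteratively for $d'=3,4,\ldots,d$, each time lifting the $\dimcl{d'-1}{d}{\eps'}$-dimensional structure to a $\dimcl{d'}{d}{\eps'}$-dimensional one; after $d-2$ applications the resulting structure handles the original $d$-dimensional problem.

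For the running-time accounting I would observe that, under the set-union semigroup, the reporting time of the lifted structure is the sum over the $\O(\lg n/\lg\lg n)$ sub-queries posed by \Cref{lem:reductionGeneral}: each sub-query contributes $\O(1)$ navigation overhead plus whatever it reports, and the sub-outputs are disjoint (they correspond to disjoint canonical ranges of the $d'$-th weight), so their sizes add up to $k$ overall. Hence, if the predecessor structure answers in $\O(f_{d'-1}(n)+k)$ time, the lifted one answers in $\O(f_{d'-1}(n)\cdot\lg n/\lg\lg n+k)$ time, and its space is $\O(s_{d'-1}(n)\cdot\lg n/\lg\lg n+n)$ words. Starting from $f_2(n)=\lg n$ and $s_2(n)=n\lg^{(d-2)\eps'}n$ and unrolling the recurrences $d-2$ times yields $f_d(n)=\O(\lg^{d-1}n/(\lg\lg n)^{d-2})$ and $s_d(n)=\O(n\lg^{d-2+(d-2)\eps'}n/(\lg\lg n)^{d-2})$. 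By the choice $(d-2)\eps'<\eps$, the latter fits inside $\O(n\lg^{d-2+\eps}n)$, matching the claim.

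The main point requiring care is the two uses of the parameter $\eps$: the internal $\eps'$ threaded through the $\dimcl{d'}{d}{\eps'}$-dimensional structures must simultaneously stay below $1/(d-2)$ (so that the base case \Cref{lemma:ancesto2de} applies) and below $\eps/(d-2)$ (so that the accumulated $\lg^{(d-2)\eps'}n$ factor is absorbed into $\lg^\eps n$). Once that calibration is pinned down, the rest is the mechanical unrolling of the space and time recurrences supplied by \Cref{lem:reductionGeneral}.
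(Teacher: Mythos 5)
Your proposal is correct and follows essentially the same route as the paper: instantiate the semigroup as set union, use \Cref{lemma:ancesto2de} as the $\dimcl{2}{d}{\eps}$-dimensional base case, and lift it $d-2$ times via \Cref{lem:reductionGeneral}, with the output-sensitive time accounting coming from the disjointness of the canonical sub-ranges. Your explicit calibration of the internal parameter $\eps'$ so that $(d-2)\eps'<\eps$ while $\eps'<1/(d-2)$ is a detail the paper glosses over (it reuses the same $\eps$ in both roles), and it is the right way to make the stated space bound hold for all $d\geq 3$.
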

\section{Path Successor}\label{section:pathRangeSuccessor}
We first solve the path successor problem when $d = 1,$
and extend the result to $d > 1$ via \Cref{lem:reductionBinaryCase}.

The topology of $T$ is stored using \Cref{lemma:smallAlphabetTrees}.
We define a binary range tree $R$ over $[n],$
and build the associated hierarchical tree extraction as in \Cref{section:preliminaries};
$T_{l}$ denotes the auxiliary tree built for each level $l$ of $R,$
and $\mathscr{T}_{v}$ denotes the tree extraction from $T$ associated with the range of node $v \in R.$
We represent $R$ using \Cref{lemma:smallAlphabetTrees}, and augment it with the
ball-inheritance data structure $\mathcal{B}$ from \Cref{lemma:conceptualTreeBallInheritance}(a),
as well as with
the data structure from the following
\begin{lemma}\label{lemma:markedView}
    Let $R$ be a binary range tree with topology encoded using \Cref{lemma:smallAlphabetTrees}, and augmented with
    ball-inheritance data structure $\mathcal{B}$ from \Cref{lemma:conceptualTreeBallInheritance}(a).
    With additional space of $\O(n)$ words,
    the node $x_{u,l}$ in $T_{l}$ corresponding to the $\mathscr{T}_{u}$-{\astralbody} of $x$
    can be found in $\O(\log^{\eps'}{n})$ time,
    where $\eps'$ is an arbitrary constant in $(0,1),$
    for an arbitrary node $x \in T$ and an arbitrary node $u \in R$ residing
    on a level $l.$
\end{lemma}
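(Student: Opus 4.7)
The plan is to interpret the query combinatorially: the $\mathscr{T}_u$-{\astralbody} of $x$ is the copy in $T_l$ of the lowest ancestor $y$ of $x$ in $T$ whose weight lies in the range $R(u)$ associated with $u$, and it is the dummy root of $\mathscr{T}_u$ if no such $y$ exists. I would construct a recursive data structure that computes $x_{u,l}$ by simulating a top-down traversal of $R$ from the root down to $u$.

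The basic one-level traversal proceeds as follows. At each visited ancestor $v$ of $u$ in $R$, I maintain the current view $y_v$ in $T_{l(v)}$ (with $l(v)$ the level of $v$), which corresponds to the lowest ancestor of $x$ in $T$ with weight in $R(v)$. To descend to the child $v'$ of $v$ on the path to $u$, let $\alpha$ denote $v'$'s index among the children of $v$. By \Cref{lemma:smallAlphabetTrees}, the label of $y_v$ tells which child range of $R(v)$ contains $w(y_v)$: if it equals $\alpha$, then $w(y_v) \in R(v')$ and \Cref{lemma:rangetreemapping} produces $y_{v'}$ in $O(1)$ time; otherwise, first replacing $y_v$ with $\mathtt{level\_anc_{\alpha}}(T_{l(v)}, y_v, 1)$ (also $O(1)$) locates the lowest ancestor of $y_v$ whose weight lies in $R(v')$, and then \Cref{lemma:rangetreemapping} completes the descent. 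Since each step is $O(1)$, this elementary approach runs in $O(\log n)$ time---too slow.

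To accelerate to $O(\log^{\epsilon'} n)$ time, I would port the nested-sampling technique behind $\mathcal{B}$ in \Cref{lemma:conceptualTreeBallInheritance}(a) to $R$. Specifically, I would partition the $O(\log n)$ levels of $R$ into $k = \lceil 1/\epsilon' \rceil$ nested tiers of geometrically decreasing block sizes, and for each $x \in T$ and each sampled level store a precomputed shortcut that collapses an entire block into an $O(1)$-time step. A query then follows at most $k$ tiers of shortcuts plus a leftover segment, each contributing $O(1)$ amortized, giving $O(\log^{\epsilon'} n)$ total query time; the total size of the shortcuts telescopes to $O(n)$ words by the same space argument as for ball inheritance.

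The main obstacle is encoding the shortcuts so they remain consistent with the adaptive part of the basic traversal: between samplings, $y_v$ evolves not only by \Cref{lemma:rangetreemapping} but also by occasional upward $\mathtt{level\_anc_{\alpha}}$ jumps whenever the labels disagree. To handle this, at each sampled level I would pre-aggregate, per node of $T$, the net effect of an entire block's worth of such jumps over every possible branching pattern within the block, and show that these aggregated effects can be represented in $O(n)$ words in total by telescoping across tiers, mirroring the amortization used by $\mathcal{B}$ across its sampled levels.
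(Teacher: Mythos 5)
Your setup is right: the $\mathscr{T}_u$-view of $x$ is the copy of the lowest ancestor of $x$ whose weight lies in $u$'s range, and your level-by-level descent (using the label of the current node, $\mathtt{level\_anc}_{\alpha}$, and \Cref{lemma:rangetreemapping}) is a correct $\O(\lg n)$ baseline --- it is exactly the ``explicitly descend the hierarchy of extractions'' option that the paper also considers and rejects as too slow. The gap is in the acceleration, which is the entire content of the lemma. Your plan stores ``a precomputed shortcut per node of $T$ and per sampled level,'' but this query has \emph{two} free parameters: the answer depends not only on $x$ and the level $l$ but on \emph{which} node $u$ at level $l$ you are descending toward, and different targets $u$ under the same block give different views. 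A shortcut that collapses a block of $b$ levels must therefore be indexed by the branching pattern inside the block, of which there are $2^{b}$; for the coarse tiers (blocks of $\Theta(\lg n)$ or even $\Theta(\lg^{1-\eps'}n)$ levels) this is polynomial in $n$ per node of $T$, and no telescoping rescues it. The ball-inheritance amortization you invoke does not transfer: ball inheritance answers a \emph{one}-parameter query (each ball at each level has a unique source, hence one pointer per ball per sampled level suffices), whereas here each $(x,\text{sampled level})$ pair has many possible answers. Your own last paragraph flags this as ``the main obstacle'' but offers no mechanism for resolving it, so the proof does not go through as written.

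The paper avoids precomputing descents altogether. At every marked level (one level in each $\ceil{\lg\lg n}$), and for every range-tree node $u$ on it, it keeps a succinct predecessor index $S_u$ over the conceptual sorted array $A_u$ of preorder ranks (in $T$) of the nodes with weights in $u$'s range; entries of $A_u$ are not stored but decoded on demand via $\O(1)$ calls to $\mathcal{B}$, which is where the $\O(\lg^{\eps'}n)$ comes from. The view of $x$ is then computed \emph{directly} at the marked level by the predecessor-plus-$\mathtt{LCA}$ characterization of~\cite{DBLP:journals/algorithmica/HeMZ14}: find the predecessor $p$ of $x$ in $A_u$, take $\chi=\mathtt{LCA}(x,p)$, and either $\chi$ itself or the parent (in $T_l$) of the successor of $\chi$ in $A_u$ is the answer; an annotation array $I$ converts positions in $A_u$ to preorder ranks in $T_l$. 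Since the arrays $A_u$ over all $u$ on one level partition the $n$ nodes, the indices cost $\O(n\lg\lg n)$ bits per marked level and $\O(n)$ words overall, independent of how many nodes $u$ live on that level --- precisely the space accounting your shortcut scheme cannot achieve. Unmarked levels are handled by your own $\O(1)$-per-level descent from the nearest marked ancestor level, an additive $\O(\lg\lg n)$. If you want to salvage your write-up, replace the shortcut tiers with this marked-level predecessor/$\mathtt{LCA}$ computation.
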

\begin{proof}
    For a node $u \in R$ at a level $l,$ and a node $x \in T,$
    the query can be thought of as a chain of 
    transformations $T \rightarrow \mathscr{T}_{u} \rightarrow T_{l}.$
    In the first transition, $T \rightarrow \mathscr{T}_u,$ 
    given an original node $x \in T,$
    we are looking for its $\mathscr{T}_u$-{\astralbody}, $x_u.$ That is, although $\mathscr{T}_u$
    is obtained from $T$ through a {\textit{series}} of extractions,
    the wish is to ``jump'' many successive extractions at once,
    as if $\mathscr{T}_u$ were extracted from $T$ {\textit{directly}}. 
    This would be trivial to achieve through storing a $0/1$-labeled tree per range $u,$
    if it were not for prohibitive space-cost -- number of bits
    more than quadratic in the number of nodes.
    One can avoid extra space cost altogether
    and use \Cref{lemma:rangetreemapping} directly to explicitly
    descend the hierarchy of extractions. In this case, the time cost
    is proportional to the height of the range tree, and hence becomes the bottleneck.
    
    In turn, in the $\mathscr{T}_u \rightarrow T_l$-transition,
    we are looking for the identity of $x_u$ in $T_l.$
	For this second transformation, we recall (from \Cref{section:preliminaries}) that $\mathscr{T}_u$ 
	is embedded within $T_l.$ Moreover, the nodes of $\mathscr{T}_u$ must lie contiguously in the preorder sequence of $T_l.$
	
    We overcome these difficulties with the following data structures.

    For $R,$ we maintain an annotation array $I,$
    such that $I[u]$ stores a quadruple $\langle{}a_u,b_u,s_u,t_u\rangle$ for an arbitrary node $u\in R$, 
    such that
    (i) the weight range associated with $u$ is $[a_u,b_u];$ and
    (ii) all the nodes of $T$ with weights in $[a_u,b_u]$ occupy precisely the preorder 
    ranks $s_u$ through $t_u$ in $T_{l}.$
    The space occupied by the annotation array $I$, which is $\O(\lg{n})$ bits summed over
    all the $\O(n)$ nodes of $R,$ is $\O(n)$ words.

    For
    each level $L \equiv 0\,\mathrm{mod}\, \ceil{\lg\lg{n}}$, which we call {\textit{marked}},
    we maintain a data structure enabling the direct 
    $T \rightarrow \mathscr{T}_u \rightarrow T_L$-conversion.
    Namely,
    for each individual node $u$ on marked level $L$ of $R,$ 
    we define a conceptual array $A_{u}$, which stores, 
    in increasing order, the (original) preorder ranks of all the nodes of 
    $T$ whose weights are in the range represented by $u$.
    Rather than maintaining $A_{u}$ explicitly, we store a succinct index, $S_{u}$, 
    for predecessor/successor search~\cite{grossi_et_al:LIPIcs:2009:1847} in $A_{u}$.
    Assuming the availability of a $\smallO(n^{\delta})$-bit universal table, where $\delta$ is a 
    constant in $(0, 1)$, given an arbitrary value in $[n]$, 
    this index can return the position of its predecessor/successor in 
    $A_{u}$ in $\O(\lg\lg{n})$ time plus accesses to $\O(1)$ entries of $A_{u}$. 
    The size of the index in bits is $\O(\lg\lg{n})$ times the number of entries in $A_{u}$.
    For a fixed marked level $L,$ therefore, all the $S_{u}$-structures sum up to $\O(n\lg\lg{n})$ bits.
    There being $\O(\lg{n}/\lg\lg{n})$ marked levels, the total space cost
    for the $S_{u}$-structures over all the entire tree $R$ is $\O(n)$ words.

    We now turn to answering the query using the data structures built.
    Resolving the query falls into two distinct cases, depending
    on whether the level $l$, at which the query node $u$ resides, is marked or not.

    When the level $l$ is marked, we use the structures $S_{u}$ stored therein, directly.
    We adopt the strategy in~\cite{DBLP:journals/algorithmica/HeMZ14} to find $x_{u,l}.$ 
    First, for an arbitrary index $i$ to $A_u,$ we observe that node $A_{u}[i] \in T$ corresponds to node $(s_u+i-1)$ in $T_{l}.$
    We thus fetch $\langle{}a_u,b_u,s_u,t_u\rangle{}$ from $I[u].$ Then the predecessor $p \in A_{u}$ of $x$
    is obtained through $S_{u}$ via an $\O(\lg\lg{n})$ query and $\O(1)$ calls to the $\mathcal{B}$-structure,
    which totals $\O(\lg^{\eps'}{n})$ time. We then determine the lowest common ancestor $\chi \in T$ 
    of $x$ and $p,$ in $\O(1)$ time. If the weight of $\chi$ is in $[a_u,b_u],$ then it 
    must be present in $A_{u}$ by the
    latter's very definition.
    By another predecessor query, therefore, we can find the position, $j$, 
    of $\chi$ in $A_{u}$, and $(s_u+j-1)$ is the sought $x_{u,l}.$
    Otherwise, a final query to $S_{u}$ returns the successor $\chi'$ in $A_{u}$ of $\chi.$
    Let $\kappa$ be the position of $\chi'$ in $A_{u}$. 
    Then the parent of the node $(s_u+\kappa-1)$ in $T_{l}$ is $x_{u,l}.$
    We perform a constant number of predecessor/successor queries, and correspondingly
    a constant number of calls to the ball-inheritance problem. The time complexity
    is thus $\O(\lg^{\eps'}{n}).$

    When the level $l$ is not marked, we ascend to the lowest ancestor $u'$ of 
    $u$ residing on a marked level $l',$ and reduce the problem to the previous case.
    More precisely, via the navigation operations ($\mathtt{level\_anc}()$
    to move to a parent, and $\mathtt{depth}()$ to determine the status of a level)
    available through $R$'s encoding, we climb up at most $\ceil{\lg\lg{n}}$
    levels to the closest marked level $l'.$ Let $u'$ be therefore
    the ancestor of $u$ found on that marked level $l'$.
    We find the answer to the original query, as if the query
    node were $u';$ that is, we find the node $x'_{u',l'}$ in $T_{l'}$ 
    that corresponds to the $\mathscr{T}_{u'}$-{\astralbody}
    of the original query node $x$ in $T.$
    Let us initialize a variable $\chi$ to be $x_{u',l'}.$
    We descend down to the original level $l,$ back to the original
    query node $u,$ all the while
    adjusting the node $\chi$ as we move down a level, analogously
    to the proof of \Cref{lem:reductionBinaryCase}.
    As we arrive, in time $\O(\lg\lg{n})$, at node $u,$
    the variable $\chi$ stores the answer, $x_{u,l}.$

    In both cases, the term $\O(\lg^{\eps'}{n})$ dominates the time complexity,
    as climbing to/from a marked level is an additive term of $\O(\lg\lg{n}).$
    Therefore, a query is answered in $\O(\lg^{\eps'}{n})$ time.
\end{proof}

Finally, each $T_{l}$ is augmented with succinct indices ${m}_{l}$ 
(resp. ${M}_{l}$) from \Cref{lemma:lemmaPathMinimum}(b),
for path minimum (resp. path maximum) queries.
As weights of the nodes of $T_{l},$ the weights of their corresponding nodes in $T$ are used.

We now describe the algorithm for answering queries
and analyze its running time, as well as give the space cost of the built data structures:

\begin{lemma}\label{lemma:rangeSuccessorBaseCase01}
    A scalarly-weighted tree $T$ on $n$ 
    nodes can be represented in $\O(n)$ words,
    so that a path successor query
    is answered in $\O(\lg^{\eps}{n})$ time, where $\eps \in (0,1)$ is a constant.
\end{lemma}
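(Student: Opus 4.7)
The plan is to reduce the query on $P_{x,y}$ via $\mathtt{LCA}$ to two ancestor-path successor instances, on $A_{x,z}$ and $A_{y,z}$ with $z = \mathtt{LCA}(T,x,y)$, plus a constant-time check of $z$ itself (comparing $w(z)$ against the two returned values). I describe the procedure for $A_{x,z}$; $A_{y,z}$ is symmetric. In $R$, I conceptually descend from the root along the root-to-leaf path for $q_1$; at each level $l$, write $v_l$ for the node on this descent path and $[a_l,b_l]$ for its range, both obtainable in $\O(1)$ time from $q_1$ and $l$ by simple arithmetic.

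The core idea is to binary-search over the levels of $R$ for the deepest place where the successor can live. Define $h(l) = 1$ precisely when some node of $A_{x,z}$ has weight in $[q_1,b_l]$. Equivalently, $h(l) = 1$ iff the path-max query on the open path $A_{x_{v_l,l}, z_{v_l,l}}$ in $T_l$, answered via $M_l$, returns a node whose weight (fetched through $\mathcal{B}$ and the preorder-indexed weight array) is at least $q_1$; the open endpoint is realized by one $\mathtt{level\_anc}$ step in $T_l$, taking $\O(1)$ time by \Cref{lemma:smallAlphabetTrees}. Because $\mathscr{T}_{v_{l+1}} \subseteq \mathscr{T}_{v_l}$, the predicate $h$ is monotonically non-increasing in $l$, so I binary-search for the largest $l^*$ with $h(l^*)=1$. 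Each probe uses \Cref{lemma:markedView} to obtain the views $x_{v_l,l}$ and $z_{v_l,l}$ in $\O(\log^{\eps'}{n})$ time, plus a path-max query on $T_l$ which by \Cref{lemma:lemmaPathMinimum}(b) uses $\O(\alpha(n))$ time and $\O(\alpha(n))$ weight accesses at $\O(\log^{\eps'}{n})$ each through $\mathcal{B}$. A probe thus costs $\O(\log^{\eps'}{n})$, and $\O(\log\log{n})$ probes total $\O(\log^{\eps'}{n}\cdot\log\log{n}) = \O(\log^{\eps}{n})$ whenever $\eps'$ is chosen strictly below $\eps$.

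To extract the answer from $l^*$: if $l^*$ equals the leaf level of $R$, then $\mathscr{T}_{v_{l^*}}$ contains only the weight $q_1$, so the node returned by the path-max at $l^*$ is the successor. Otherwise, the descent at $l^*\to l^*+1$ must go L, because had it gone R we would have $b_{l^*+1}=b_{l^*}$ and thus $h(l^*+1)=h(l^*)=1$, contradicting maximality. Hence the right sibling $v^R$ of $v_{l^*+1}$ exists, has range $[b_{l^*+1}+1,b_{l^*}]$, and must contain the successor's weight: $h(l^*+1)=0$ excludes any $A_{x,z}$-weight in $(q_1,b_{l^*+1}]$ while $h(l^*)=1$ forces one in $(q_1,b_{l^*}]$. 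From the views already in $T_{l^*}$, one application of \Cref{lemma:rangetreemapping} yields the corresponding views in $T_{l^*+1}$ for $v^R$ in $\O(1)$ time; a path-min query via $m_{l^*+1}$ then pinpoints the successor, and a final $\mathcal{B}$-lookup returns it as a node of $T$ in $\O(\log^{\eps'}{n})$, within budget. A single upfront evaluation of $h(1)$ detects the degenerate case in which no successor exists.

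The space bound follows by summing the constituents: the encoding of $T$ uses $\O(n)$ bits by \Cref{lemma:smallAlphabetTrees}; the range tree $R$ and all its auxiliary trees $T_l$ together occupy $n+\smallO(n)$ words as noted in \Cref{section:preliminaries}; the ball-inheritance structure $\mathcal{B}$ at option (a) uses $\O(n\lg n)$ bits, and \Cref{lemma:markedView} contributes $\O(n)$ words; each $m_l$ and $M_l$ fits in $2n+\smallO(n)$ bits by \Cref{lemma:lemmaPathMinimum}(b), summing to $\O(n\lg n)$ bits over the $\O(\lg n)$ levels; and the preorder-indexed weight array costs $n$ words, totaling $\O(n)$ words. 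The principal subtlety I anticipate in the proof is matching $A_{x,z}\cap[a_l,b_l]$ with the open path $A_{x_{v_l,l},z_{v_l,l}}$ in $\mathscr{T}_{v_l}$ when $z$ itself lacks a weight in $[a_l,b_l]$: then $z_{v_l,l}$ corresponds to a strict $T$-ancestor $z'$ of $z$ rather than to $z$, and one has to verify that no node of $T$ strictly between $z$ and $z'$ contributes to $\mathscr{T}_{v_l}$—an immediate consequence of $z'$ being the \emph{lowest} ancestor of $z$ with weight in $[a_l,b_l]$—so that the strict descendants of $z_{v_l,l}$ in $\mathscr{T}_{v_l}$ that are ancestors of $x_{v_l,l}$ correspond bijectively to $A_{x,z}\cap[a_l,b_l]$.
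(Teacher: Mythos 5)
Your proposal is correct and follows essentially the same route as the paper: an LCA decomposition, a binary search over the levels of the root-to-leaf path for $q_1$ in the binary range tree using path-maximum queries (with views obtained via \Cref{lemma:markedView} and weights fetched through $\mathcal{B}$) to evaluate a monotone predicate, followed by a path-minimum query in the right sibling's extraction at the transition level, with the identical $\O(\lg\lg n \cdot \lg^{\eps'} n \cdot \alpha(n))$ time and $\O(n)$-word space accounting. The only differences are cosmetic (you phrase the invariant as an explicit predicate $h(l)$ and check the no-answer case at the root, whereas the paper initializes the invariant by testing the leaf first), and your closing remark on the view of $z$ makes explicit a correspondence the paper leaves implicit.
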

\begin{proof}
Let $x,y$ and $Q=[q_1,q'_1]$ be respectively the nodes and the orthogonal range given as query's parameters.
Appealing to the proof of \Cref{lem:reductionBinaryCase}, we focus only on 
the path $A_{x,z},$ where $z$ is $\mathtt{LCA}(T,x,y).$ 
We locate in $\O(1)$ time the leaf $L_{q_1}$ of $R$ that corresponds to the singleton range $[q_1,q_1].$
Let $\Pi$ be the root-to-leaf path to $L_{q_1}$ in $R;$ let $\pi_l$ be the node at level $l$ of $\Pi.$
We binary search in $\Pi$ for the deepest node $\pi_f \in \Pi$ whose associated 
extraction $\mathscr{T}_{\pi_f}$ contains the node corresponding to the answer to the given query.

We initialize two variables: $high$ as $1$ so that $\pi_{high}$ is the root of $R$, and $low$
as the height of $R$ so that $\pi_{low}$ is the leaf $L_{q_1}.$ 
We first check if $\mathscr{T}_{\pi_{low}}$ already contains the answer,
by fetching the node $x'$ in $T_{low}$ corresponding to the $\mathscr{T}_{\pi_{low}}$-{\astralbody} of $x,$
using \Cref{lemma:markedView}.
If $x'$ exists, we examine its corresponding node $x''$ in $T$ (fetched via $\mathcal{B}$)
to see whether $x''$ is on $A_{x,z},$ 
by performing $\mathtt{depth}$ and $\mathtt{level\_anc}$
operations in $R;$ 
if it is, $x''$ is the final answer.
If not, this establishes the invariant of the ensuing search:
$\mathscr{T}_{\pi_{high}}$ contains a node corresponding to the answer, whereas $\mathscr{T}_{\pi_{low}}$ does not.
   
At each iteration, therefore, we set (via $\mathtt{level\_anc}$ in $R$) $\pi_{mid}$ to be the node mid-way from $\pi_{low}$
to $\pi_{high}.$ 
We then fetch the nodes $x',z'$ in $T_{mid}$ corresponding to the $\mathscr{T}_{\pi_{mid}}$-{\astralbody}s of respectively $x$ and $z$,
using \Cref{lemma:markedView}.
The non-existence of $x'$ or the emptiness of $A_{x',z'}$
sets $low$ to $mid,$ and the next iteration of the search ensues.
If $z'$ does not exist, $z'$ is set to the root of $T_{mid}.$
A query to the $M_{mid}$-structure then locates a node in $A_{x',z'}$ for which the $1$st weight, $\mu$, of its 
corresponding node in $T$ is maximized.
Accounting for the mapping of a node in $T_{mid}$ to its corresponding node in $T$ via $\mathcal{B}$,
this query uses $\O((\lg^{\eps'} n)\alpha(n))$ time. 
The variables are then updated as $high \leftarrow mid$ if $\mu \geq q_1$, and $low \leftarrow mid,$ otherwise.

Once $\pi_f$ is located, it must hold for $\pi_f$ that
(i) it is its left child that is on $\Pi$~\cite{DBLP:conf/swat/NekrichN12}; and
(ii) its right child, $v$, contains the query result, even though $v$ represents a range of values all larger than $q_1$.
When locating $\pi_f$, we also found the nodes in $T_{f}$ corresponding to the $\mathscr{T}_{\pi_{f}}$-{\astralbody}s of $x$ and $z$;
they can be further used to find the nodes in $T_{f+1}$, $x^*$ and $z^*$, corresponding to the $\mathscr{T}_{v}$-{\astralbody}s of $x$ and $z$.
We then use $m_{f+1}$ to find the node in $A_{x^{*},z^{*}}$ with minimum $1$st weight, whose 
corresponding node in $T$ is the answer.

The total query time is dominated by that needed for binary search.
Each iteration of the search is in turn dominated by the path maximum query in $T_{mid}$, which is $\O((\lg^{\eps'} n)\alpha(n)).$
Given the $\O(\lg{n})$ levels of $R$, the binary search has $\O(\lg\lg{n})$ iterations.
Therefore, the total running time is $\O(\lg\lg{n}\cdot{}\lg^{\eps'}{n}\cdot{}\alpha(n)) = \O(\lg^{\eps} n)$ if we choose $\eps' < \eps.$ 

To analyze the space cost, we observe that
    the topology of $T$, represented using \Cref{lemma:smallAlphabetTrees}, uses only $2n+\smallO(n)$ bits.
    As mentioned in \Cref{section:preliminaries}, all the structures $T_{l}$ occupy $\O(n)$ words.
    The space cost of the structure from \Cref{lemma:markedView}
    built for $R$ is $\O(n)$ words.
    The $\mathcal{B}$-structure occupies another $\O(n)$ words.
    The ${m}_{l}$- and ${M}_{l}$-structure occupy $\O(n)$ bits each,
    or $\O(n)$ words in total over all levels of $R.$ 
    Thus, the final space cost is $\O(n)$ words.
\end{proof}

\Cref{lemma:rangeSuccessorBaseCase01,lem:reductionBinaryCase}
yield the following
\begin{theorem}\label{theorem:pathSuccessor}
  Let $d \geq 1$ be a constant integer.
  A tree $T$ on $n$ nodes, in which each node is assigned a $d$-dimensional weight vector
  can be represented in $\O(n\lg^{d-1}{n})$ 
  words, so that 
  a path successor query can be answered in $\O(\lg^{d-1+\eps}{n})$
  time, for an arbitrarily small positive constant $\eps$. 
\end{theorem}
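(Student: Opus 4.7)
The plan is to proceed by induction on $d$, combining the $d = 1$ base case from \Cref{lemma:rangeSuccessorBaseCase01} with repeated applications of the reduction framework of \Cref{lem:reductionBinaryCase}. The first step is to observe that path successor is a special instance of the $d$-dimensional semigroup path sum problem: define the semigroup $(V(T) \cup \{\bot\}, \oplus)$, where $\bot$ is an identity element and $u \oplus v$ returns whichever of $u, v$ has the smaller first weight (with $\bot$ treated as weight $+\infty$, ties broken arbitrarily). Setting $g(x) = x$ for every node $x \in T$, the semigroup sum $\bigoplus_{z \in P_{x,y},\ \pnt{w}(z) \in Q} g(z)$ returns precisely the node realizing $\mathtt{argmin}\{w_1(z) \mid z \in P_{x,y},\ \pnt{w}(z) \in Q\}$.

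Next, I would take as the base case ($d = 1$) the data structure of \Cref{lemma:rangeSuccessorBaseCase01}, which occupies $\O(n)$ words and answers a query in $\O(\lg^{\eps'} n)$ time for any constant $\eps' \in (0,1)$. For the inductive step, assuming a $(d-1)$-dimensional semigroup path sum structure of size $\mathtt{s}(n) = \O(n \lg^{d-2} n)$ words with query time $\O(\lg^{d-2+\eps'} n)$, \Cref{lem:reductionBinaryCase} yields a $d$-dimensional structure of size $\O(\mathtt{s}(n) \lg n + n) = \O(n \lg^{d-1} n)$ words; its queries are answered by performing $\O(\lg n)$ $(d-1)$-dimensional queries plus $\O(1)$ overhead to determine each, giving total query time $\O(\lg n) \cdot \O(\lg^{d-2+\eps'} n) = \O(\lg^{d-1+\eps'} n)$. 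Setting $\eps' = \eps$ closes the induction and gives the claimed bounds.

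The main technical point to check is simply that path successor really does fit the semigroup path sum abstraction; the ``take node with smaller $w_1$'' operation is associative and commutative, so the framework applies unchanged. Beyond that, the argument is mechanical: unwinding the recurrences $\mathtt{s}_d(n) = \O(\mathtt{s}_{d-1}(n) \lg n + n)$ and $\mathtt{t}_d(n) = \O(\mathtt{t}_{d-1}(n) \lg n)$ from the base values $\mathtt{s}_1(n) = \O(n)$ and $\mathtt{t}_1(n) = \O(\lg^{\eps} n)$ delivers $\O(n \lg^{d-1} n)$ space and $\O(\lg^{d-1+\eps} n)$ query time. No obstacle more serious than keeping track of the accumulating logarithmic factors arises, since \Cref{lem:reductionBinaryCase} is deliberately phrased so that its application is black-box and the per-query $\O(1)$ overhead does not disturb the base-case exponent $\eps$.
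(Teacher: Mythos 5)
Your proposal is correct and follows essentially the same route as the paper: instantiate the semigroup path sum framework with $g(x)=x$ and $\oplus$ returning the argument of smaller first weight, then apply \Cref{lem:reductionBinaryCase} iteratively to the base case of \Cref{lemma:rangeSuccessorBaseCase01}, unwinding the recurrences $\mathtt{s}_d(n)=\O(\mathtt{s}_{d-1}(n)\lg n+n)$ and $\mathtt{t}_d(n)=\O(\mathtt{t}_{d-1}(n)\lg n)$. The only cosmetic difference is your explicit adjunction of an identity element $\bot$, which the paper omits but which changes nothing.
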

\begin{proof}
   We instantiate \Cref{section:masterLemmaBinaryCase} with $g(x) = x$ 
   and the semigroup sum operator $\oplus$ as $x\oplus{}y = \mathtt{argmin}_{\zeta = x,y}\{w_1(\zeta)\}.$ 
   \Cref{lem:reductionBinaryCase} applied to \Cref{lemma:rangeSuccessorBaseCase01} yields the space bound of
   $\O(n\lg^{d-1}{n})$ words
   and query time complexity of $\O(\lg^{d-1+\eps}{n}).$
\end{proof}

\section{Path Counting}\label{appendix:pathCounting}
Note that we can not directly apply previous approaches, 
such as e.g.~\cite{He:2016:DSP:2983296.2905368} as a base data structure
for the $1D$ case. Those being not $\dimcl{1}{d}{\eps}$-dimensional data structures,
na{\"i}ve way of accommodating the last $(d-1)$ weights
would inevitably incur an extra $\lg^{2(d-1)\eps}{n}$-factor
in space. To do better than that, we need a few more techniques,
notably, {\textit{tree covering.}}
Tree covering first appeared in~\cite{Geary:2006:SOT:1198513.1198516,DBLP:journals/talg/HeMS12,DBLP:journals/algorithmica/FarzanM14}
as a method of succinct representation of ordinal trees.
The original tree is split into {\textit{mini}}-trees, given a certain parameter $L$:
\begin{lemma}[\cite{DBLP:journals/algorithmica/FarzanM14}]\label{lemma:lemmaOnTreeCover01}
	A tree with $n$ nodes can be decomposed into $\Theta(n/L)$ subtrees
	of size at most $2L.$ These are pairwise disjoint aside from the subtree roots.
	Furthermore, aside from the edges stemming from the subtree roots,
	there is at most one edge per subtree leaving a node of a subtree
	to its child in another subtree.
\end{lemma}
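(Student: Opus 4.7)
The plan is to construct the decomposition by a greedy post-order sweep of $T,$ following the standard strategy behind succinct tree representations. Call a child $c$ of a node $v$ \emph{heavy} if the subtree rooted at $c$ has size greater than $L,$ and \emph{light} otherwise. Processing nodes in post-order, I maintain the invariant that after a node $v$ has been processed, the subtree rooted at $v$ has been partitioned into a collection of closed mini-trees (each of size in $[L,2L]$, subject to some boundary sharing) plus a single \emph{open remainder} of size less than $L$ containing $v,$ which will be passed up to $v$'s parent.

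At node $v$ I would sweep through its children from left to right, maintaining an accumulator that collects the open remainders of consecutive light children. As soon as the accumulator's size reaches $L,$ I close it off as a new mini-tree and let $v$ serve as its single boundary root, then reset the accumulator. Heavy children are treated specially: when I encounter a heavy child $c,$ I first close off whatever is currently accumulated (attaching $v$ as its root), close off $c$'s own open remainder as its own mini-tree (again with $v$ as a shared root boundary), and only then continue the sweep. After the last child has been processed, whatever remains in the accumulator together with $v$ itself forms the open remainder handed up to $v$'s parent; if the final accumulator together with $v$ alone already exceeds $2L,$ I close instead and start a singleton remainder $\{v\}.$

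The required properties follow from the construction. By induction, every closed mini-tree has size at least $L$ (since the accumulator is closed only when its size first reaches $L$) and at most $2L$ (since the last remainder merged in has size strictly less than $L$); therefore the number of mini-trees is $\Theta(n/L).$ Pairwise disjointness modulo roots is immediate because the only node shared between a mini-tree and its parent mini-tree is the root used as boundary. For the single-edge condition, observe that within a mini-tree, the only way to leave to a descendant mini-tree without going through the root is via the boundary where a heavy child $c$ of the root was split off, and the construction admits at most one such boundary per mini-tree by forcing a close whenever a heavy child is encountered.

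The main obstacle, and the reason for the careful case analysis above, is handling heavy children and the trailing remainder at the top of the recursion: a naive accumulation may produce mini-trees whose total escape-edges to foreign subtrees exceed one per mini-tree, or leave a root component whose size falls outside $[L,2L].$ Carefully closing accumulators around heavy children, and allowing up to $2L$ as the upper bound so that one ``last merge'' never forces overflow, are the two design choices that simultaneously secure the size bound, the count $\Theta(n/L),$ and the at-most-one-extra-escape-edge property.
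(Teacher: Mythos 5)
This lemma is quoted from Farzan and Munro~\cite{DBLP:journals/algorithmica/FarzanM14}; the paper gives no proof of its own, so your attempt can only be judged against the original construction, which is indeed a greedy bottom-up accumulation of the kind you describe. The overall shape of your argument is right, but the handling of heavy children as written breaks the component count. You close off the current accumulator \emph{and} the heavy child's open remainder every time a heavy child is encountered, yet you also claim that ``every closed mini-tree has size at least $L$ since the accumulator is closed only when its size first reaches $L$.'' These two statements contradict each other: the accumulator closed just before a heavy child, and the heavy child's own remainder, can each have size far below $L$ (even size $1$). Concretely, on a path of $n$ nodes with $L \ll n$, every node except the bottom $L$ has a single heavy child, so your rule closes the child's remainder at every level and produces $\Theta(n)$ components of constant size rather than $\Theta(n/L)$ components. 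The size lower bound therefore fails, and with it the claimed derivation of the $\Theta(n/L)$ count.

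The missing idea is that undersized components cannot be avoided, so their \emph{number} must be bounded separately. In the Farzan--Munro argument, forced early closures are charged to nodes with at least two heavy children (branching nodes of the ``heavy subtree''), of which there are $\O(n/L)$ because the subtrees below distinct heavy leaves are disjoint and each has more than $L$ nodes; a node with exactly one heavy child simply passes that child's open remainder upward rather than sealing it. You would also need to be more careful with the one-escape-edge property: an external edge out of a component $C$ arises whenever a non-root node of $C$ has a child lying in a permanent component declared earlier, and this can happen inside the remainder inherited from a heavy child, not only ``at the boundary where a heavy child of the root was split off.'' Ensuring that each temporary component carries at most one such dangling edge upward is precisely the invariant that dictates when closures are forced, and it is absent from your write-up.
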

Each of the mini-trees
in turn can be recursively decomposed into {\textit{micro}}-trees, with another parameter $L'.$
The idea is to choose the parameter $L'$ such that {\textit{intra}}-micro-tree queries
are executed in constant-time by virtue of a precomputed table $\mathcal{T}$ of
size $\smallO(n),$
indexed by micro-trees. Consequently, for any given node $x \in T,$
the solutions of~\cite{Geary:2006:SOT:1198513.1198516,DBLP:journals/talg/HeMS12,DBLP:journals/algorithmica/FarzanM14}
provide a constant-time access to the mini-tree $\tau$ and micro-tree $\tau'$
containing the node $x,$ as well as the address of the micro-tree $\tau'$ in the table $\mathcal{T},$
using $\O(n)$ bits of space, 
for suitably chosen parameters $L$ and $L'.$

The goal of this section is to design a data structure to solve the 
path counting problem 
when the nodes are assigned $\dimcl{0}{d}{\eps}$-dimensional weight vectors.
It turns out that when the weight vector of a node can be packed into $\smallO(\lg{n})$ bits,
counting queries can be executed in constant time. The key machinery used is tree covering.

Let $T$ be a given ordinal tree on $n$ nodes, each node of which is assigned
a $\dimcl{0}{d}{\eps}$-dimensional weight vector. Let $\rho$ be the root of $T.$
In our solution to the $\dimcl{0}{d}{\eps}$-dimensional path counting problem for $T,$
we set $c = \ceil{\lg^{\eps}n},$ and using \Cref{lemma:lemmaOnTreeCover01}
perform the decomposition of $T$ into mini-trees with parameter 
$L = c^{2d}\lg{n}.$ Each of the mini-trees is further subject to decomposition
into micro-trees with parameter $L' = c^{2d}.$
Each mini- or micro-tree $b$ stores
an array $b.cnt,$ indexed by a tuple from $([c]\times{}[c])^d,$ 
with the following contents
(henceforth let $r_b$ be the root of the mini- or micro-tree $b$):
\begin{itemize}
    \item for a mini-tree $b,$ an $\O(\lg{n})$-bit number $b.cnt[Q]$
          stores the answer to the path counting query with parameters $r_b,\rho,Q;$ 
          i.e., $b.cnt[Q]$ is the number
          of the nodes with weight vectors falling within the range $Q$
          on the path $A_{r_b,\rho}$ in $T,$ where $\rho$ is the root of $T;$
    \item for a micro-tree $b'$ inside a mini-tree $b,$
          an $\O(\lg\lg{n})$-bit number $b'.cnt[Q]$ stores the answer
          to the path counting query with parameters $r_{b'},r_{b},Q;$
          i.e., $b'.cnt[Q]$ is the number
          of the nodes with weight vectors falling within the range $Q$
          on the path $A_{r_{b'},r_{b}}.$
\end{itemize}
We also precompute a look-up table $D$ that is indexed by a quadruple
from the following Cartesian product:
\begin{itemize}
    \item all the possible micro-tree topologies $\tau,$ {\textit{times}}
    \item all possible assignments $\lambda$ of weight vectors to the nodes of $\tau,$ {\textit{times}}
    \item all nodes in $\tau$, {\textit{times}}
	\item all possible query orthogonal ranges $Q.$
\end{itemize} 
The entry $D[\tau,\lambda,x,Q]$ stores the answer
to the path counting query with parameters $x,r_{\tau},Q$ over a micro-tree
with topology $\tau,$ nodes of which are assigned weight vectors
from configuration $\lambda.$ More precisely, $\lambda$ is 
a labeling of the micro-tree $\tau$ with $\dimcl{0}{d}{\eps}$-dimensional
weight vectors, and $D[\tau,\lambda,x,Q]$ stores the number
of nodes on the path $A_{x,r_{\tau}}$ in $\tau,$ such
that their weight vectors belong to the range $Q.$

We now show how to use these data structure to answer queries.

\begin{lemma}\label{lemma:pathCountingQueryBaseCase0Time}
  The data structures in this section can answer a $\dimcl{0}{d}{\eps}$-dimensional 
  path counting query in $\O(1)$ time, for any constant integer $d \ge 1.$
\end{lemma}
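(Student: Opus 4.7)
The plan is to reduce a path counting query on $P_{x,y}$ to three ancestor path counts via the $\mathtt{LCA}$ decomposition, and then to show that each ancestor path count is the sum of three $O(1)$-time table lookups — one per layer of the tree-cover hierarchy (mini-tree, micro-tree, intra-micro-tree).

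First, I would compute $z = \mathtt{LCA}(T,x,y)$ in constant time using \Cref{lemma:smallAlphabetTrees}. Since the decomposition $P_{x,y} = A_{x,z}\sqcup A_{y,z}\sqcup\{z\}$ is disjoint, and the identity $A_{u,z} = A_{u,\rho}\setminus A_{z,\rho}$ holds whenever $z$ is an ancestor of $u$ (here $\rho$ denotes the root of $T$), the problem reduces to computing $C_u \triangleq |\{v \in A_{u,\rho} : \pnt{w}(v) \in Q\}|$ for $u \in \{x,y,z\}$ and returning $C_x + C_y - 2C_z + [\pnt{w}(z)\in Q]$; since $\pnt{w}(z)$ packs into $\smallO(\lg n)$ bits and $d$ is constant, the indicator is evaluated in $O(1)$.

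Second, to compute $C_u$ for a single node $u$, I would invoke the tree-cover navigational structure of \Cref{lemma:lemmaOnTreeCover01} (and its associated $\O(n)$-bit index) to obtain in $O(1)$ time the mini-tree $b$ and micro-tree $b'$ containing $u$, together with the index $(\tau_{b'},\lambda_{b'})$ into the precomputed table $D$. Writing $r_b$ and $r_{b'}$ for their roots, the ancestor path admits the disjoint tiling $A_{u,\rho} = A_{u,r_{b'}} \sqcup A_{r_{b'},r_b} \sqcup A_{r_b,\rho}$, because each factor uses the ``include bottom, exclude top'' convention of $A_{\cdot,\cdot}$ and the top endpoints line up pairwise. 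I would then fetch $D[\tau_{b'},\lambda_{b'},u,Q]$, $b'.cnt[Q]$, and $b.cnt[Q]$ and sum them; each is a single array access returning a precomputed integer, hence $O(1)$ word-RAM time. Combining the three $C_u$'s as above then yields the final answer in a constant number of arithmetic operations.

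The main obstacle is to verify that the ``open-at-top'' convention of the stored counts meshes with the three-part partition: $b.cnt[Q]$ must include $r_b$ and exclude $\rho$, $b'.cnt[Q]$ must include $r_{b'}$ and exclude $r_b$, and $D[\tau_{b'},\lambda_{b'},u,Q]$ must include $u$ and exclude $r_{b'}$, so that no node on $A_{u,\rho}$ is double-counted or dropped. Degenerate cases — for instance $u = r_{b'}$ (making $A_{u,r_{b'}}$ empty) or $r_{b'} = r_b$ (making $A_{r_{b'},r_b}$ empty) — are absorbed automatically because each empty $A$-component contributes $0$ to the sum.
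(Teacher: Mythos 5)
Your proposal is correct and follows essentially the same route as the paper's proof: the $\mathtt{LCA}$ decomposition, reduction of each $A_{u,z}$ to a difference of root-path counts $C_u - C_z$, and evaluation of each root-path count as the $\O(1)$-time sum $D[\tau_{b'},\lambda_{b'},u,Q] + b'.cnt[Q] + b.cnt[Q]$ over the mini-/micro-tree tiling of $A_{u,\rho}$. Your explicit check of the ``include bottom, exclude top'' convention is precisely the consistency that the paper's definitions of $b.cnt$, $b'.cnt$ and $D$ (each defined over an $A_{\cdot,\cdot}$ path) are designed to guarantee.
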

\begin{proof}
Let $P_{x,y}$ and $Q$ be, respectively, the path and the orthogonal range given as the parameters to the query.
Using the notation from the proof of \Cref{lem:reductionBinaryCase}, and for the reasons given therein, 
we describe only how to answer the query over $A_{x,z},$ where $z = \mathtt{LCA}(T,x,y).$
We assume the encoding of $T$ as in \Cref{lemma:smallAlphabetTrees}, so the $\mathtt{LCA}$
operator is available; the space overhead is only $\O(n)$ bits, 
i.e. negligible with respect to the space bound we are ultimately aiming at.

We further notice that answering path counting query over $A_{x,z}$
is equivalent to answering two path counting queries, one over $A_{x,\rho}$
and another over $A_{z,\rho}$, and taking their arithmetic difference.
It is thus sufficient to describe the procedure of
answering the query over $A_{x,\rho},$ and analyze its running time,
as the query over $A_{z,\rho}$ can be handled similarly.

The key observation is that overall we perform a constant number of constant-time operations.
Indeed, using data structures of~\cite{DBLP:journals/algorithmica/FarzanM14},
we first identify, in $\O(1)$ time, the mini-tree $b$ and micro-tree $b'$ containing
the node $x,$ as well as the encoding of $b'.$
From the (disjoint) decomposition of the path
$A_{x,\rho} = A_{x,r_{b'}} \cup A_{r_{b'},r_{b}} \cup A_{r_{b},\rho},$
it is now immediate that the answer to the query over $A_{x,\rho}$ is
    $D[b',\mathtt{lab}(b'),x,Q]+b.cnt[Q]+b'.cnt[Q],$
where $\mathtt{lab}(b')$ is the labeling of the micro-tree $b'.$
Therefore, our algorithm runs in $\O(1)$ time. 
\end{proof}

We now analyze the space cost of our data structures.

\begin{lemma}\label{lemma:pathCountingQueryBaseCase0Space}
    The data structures in this section occupy $\O(n\lg\lg{n})$ bits when $\eps\in(0,\frac{1}{4d}).$
\end{lemma}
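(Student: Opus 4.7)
The plan is to bound the three contributors to space cost separately --- the mini-tree counter arrays $b.cnt$, the micro-tree counter arrays $b'.cnt$, and the universal look-up table $D$ --- and then add the three estimates. All three are routine counting arguments, but the third couples $\eps$ to the allowable range and is the only delicate step.

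First I would count the mini-tree arrays. By \Cref{lemma:lemmaOnTreeCover01} applied with parameter $L=c^{2d}\lg n$, there are $\Theta(n/L)$ mini-trees. Each $b.cnt$ stores one $\O(\lg n)$-bit entry per query range in $([c]\times[c])^d$, of which there are $c^{2d}$, giving $\O(c^{2d}\lg n)$ bits per mini-tree and hence $\O(n)$ bits in total. The same lemma applied recursively with parameter $L'=c^{2d}$ produces $\Theta(n/L')$ micro-trees. Each counter $b'.cnt[Q]$ counts nodes on a path contained inside a mini-tree of size $\O(c^{2d}\lg n)$ and so fits in $\O(\lg\lg n)$ bits; summing $c^{2d}$ such entries over $\Theta(n/c^{2d})$ micro-trees yields exactly the target $\O(n\lg\lg n)$ bits.

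For the universal table $D$ I would bound the number of entries by $\#\text{topologies}\cdot\#\text{labelings}\cdot\#\text{nodes}\cdot\#\text{ranges}$. The Catalan bound gives $2^{\O(L')}$ distinct micro-tree shapes; there are $(c^d)^{L'}$ ways to label such a shape with weight vectors from $[c]^d$; at most $L'$ possible query nodes; and $c^{2d}$ ranges. Each stored count is $\O(\lg L')=\O(\lg\lg n)$ bits. Taking logarithms and substituting $L'=c^{2d}=\O(\lg^{2d\eps}n)$ and $\lg c=\O(\eps\lg\lg n)$, the dominant contribution to $\lg|D|$ is $d\,L'\lg c = \O(\lg^{2d\eps}n\cdot\lg\lg n)$. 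Under the hypothesis $\eps<1/(4d)$ this quantity is $o(\lg n)$, so $|D|=\smallO(n)$ entries and the total bit-size of $D$ is $\smallO(n\lg\lg n)$.

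Adding the three contributions yields $\O(n\lg\lg n)$ bits, matching the claim. The main obstacle is tracking the $d\,L'\lg c$ term in the exponent of $|D|$: the $(c^d)^{L'}$ factor from labelings per node is what forces the upper bound on $\eps$, whereas the bounds on topologies, query nodes, and ranges are all subdominant and do not affect the permissible range of $\eps$.
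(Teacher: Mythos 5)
Your proposal is correct and follows essentially the same decomposition as the paper: $\O(n)$ bits for the mini-tree counters, $\O(n\lg\lg n)$ bits for the micro-tree counters, and $\smallO(n)$ entries for $D$ via the observation that the labeling count $(c^d)^{\O(L')}$ dominates the exponent and stays sub-polynomial when $\eps < \tfrac{1}{4d}$ (the paper carries this out as an explicit chain of inequalities bounding it by $\sqrt{n}$). The only omission is that the paper also charges $\O(L'\lg\lg n)$ bits per micro-tree for storing the packed labeling $\mathtt{lab}(b')$ needed to index into $D$, which adds another $\O(n\lg\lg n)$ bits overall and does not change the conclusion.
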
  
\begin{proof}
To analyze the space cost, we tally up the costs of the main constituents
of our data structure: the $cnt$-arrays 
stored at the roots of each
mini- and micro-tree, and the $D$-table.

There being $\Theta(n/(c^{2d}\lg{n}))$ mini-trees,
each of which contains an array of $c^{2d}$ elements, $\O(\lg{n})$ bits each,
the associated $cnt$-arrays contribute $\O(n/(c^{2d}\lg{n})\times{}c^{2d}\times{}\lg{n}) = \O(n)$ bits.

Analogously, for the $\Theta(n/c^{2d})$ micro-trees, 
the net contribution of the associated $cnt$-arrays is
$\O(n/c^{2d}\times{}c^{2d}\times{}\lg{(c^{2d}\lg{n})}) = \O(n\lg\lg{n})$ bits,
which is the space claimed in the statement.

It suffices to show, therefore, that the space occupied by the $D$-structure
can not exceed $\O(n\lg\lg{n})$ bits; as demonstrated below, it is much less.
Indeed, as mentioned in \Cref{lemma:lemmaOnTreeCover01},
each micro-tree can have up to $2c^{2d}$ nodes, which gives us at most 
$2^{2\cdot{}2c^{2d}}$ possible topologies $\tau.$
In turn, each of the $2c^{2d}$ nodes can independently be assigned $c^{d}$ possible weight vectors;
hence the number of possible configurations $\lambda$ is at most $(c^{d})^{2c^{2d}}$
    (the number of strings of length $2c^{2d}$ over alphabet $[c^d]$).
Furthermore, the $c^{2d}$ query orthogonal ranges $Q$
make for 
$2c^{2d} \times{} c^{2d} = 2c^{4d}$ distinct queries,
i.e. the number of nodes times the number of ranges.
The number of entries in the table $D$ is thus
at most $\O(2^{4c^{2d}}\cdot{}(c^d)^{2c^{2d}}\cdot{}2c^{4d}) = \O((4c^d)^{2c^{2d}}c^{4d}).$
The term $c^{4d}$ is $\smallO(\lg{n}).$
To upper-bound the term $(4c^d)^{2c^{2d}},$ we notice
    that $c^d = \ceil{\lg^{\eps}{n}}^{d} < \sqrt[4]{\lg{n}}/4$ for sufficiently large $n$. 
Therefore, we have the following chain of inequalities for sufficiently large $n$:
\[
    \begin{array}{lllllll}
        (4c^d)^{2c^{2d}} &<& \Bigl(\sqrt[4]{\lg n}\Bigr)^{2c^{2d}} &<& \Bigl(\sqrt[4]{\lg n}\Bigr)^{2\sqrt{\lg n}} &=& {\Bigl( \sqrt{\lg{n}} \Bigr)^{\sqrt{\lg{n}}}}\\
         &=& {\Bigl( {2^{\lg\sqrt{\lg{n}}}} \Bigr)^{\sqrt{\lg{n}}}} &=& {2^{{\sqrt{\lg{n}}} \cdot \lg\sqrt{\lg{n}}}} &<& {2^{\sqrt{\lg{n}} \cdot \frac{\sqrt{\lg n}}{2} } = \sqrt{n}}
    \end{array}
\]
Thus, the number of entries in $D$ is at most $\O(\sqrt{n}\lg^{4d\eps}{n}).$
Each entry holding a value of $\O(\lg{c^{2d}}) = \O(\lg\lg{n})$ bits,
the table $D$ occupies $\O(\sqrt{n}\lg\lg{n}\lg^{4d\eps}{n}) = \smallO(n)$ bits,
in total. 

Finally, as shown above, the number of ways to assign
weight vectors to nodes of a micro tree is a
$\O(L'\lg\lg{n})$-bit number.
Thus, the storage space for the labelings of each of the $\Theta(n/L')$ micro-trees
amounts to $\O(n\lg\lg{n})$ bits. 
\end{proof}
With \Cref{lemma:pathCountingQueryBaseCase0Time,lemma:pathCountingQueryBaseCase0Space}, we have the following
\begin{lemma}\label{lemma:pathCountingBaseCase0}
    Let $d \geq 0,\,\eps \in (0, \frac{1}{4d})$ be constants.
    A tree $T$ on $n$ nodes, in which each node is assigned a $\dimcl{0}{d}{\eps}$-dimensional weight vector, 
    can be represented in $\O(n\lg\lg{n})$ bits of space such that a path counting
	query is answered in $\O(1)$ time.
\end{lemma}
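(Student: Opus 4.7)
The plan here is essentially one of assembly, since the technical content has already been established by the two preceding lemmas of the section. First I would observe that \Cref{lemma:pathCountingQueryBaseCase0Time} already certifies that the data structures built in this section (the mini/micro-tree hierarchy under \Cref{lemma:lemmaOnTreeCover01}, together with the $cnt$-arrays at mini- and micro-tree roots and the global lookup table $D$) answer any $\dimcl{0}{d}{\eps}$-dimensional path counting query in $\O(1)$ time. This piece of the claim is simply quoted.

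Next I would invoke \Cref{lemma:pathCountingQueryBaseCase0Space}, which bounds the total space of exactly the same data structures by $\O(n\lg\lg{n})$ bits, provided $\eps \in (0, \frac{1}{4d})$. Since the lemma hypothesizes the same restriction on $\eps$, this bound applies verbatim and supplies the space side of the statement.

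The combination is immediate: the very same data structure simultaneously realizes the stated time and space bounds. I would therefore write the proof as a two-sentence invocation of \Cref{lemma:pathCountingQueryBaseCase0Time,lemma:pathCountingQueryBaseCase0Space}, with no further computation. There is no real obstacle: the only delicate point was the size of the lookup table $D$, whose bound $(4c^d)^{2c^{2d}} < \sqrt{n}$ rested on the hypothesis $\eps < 1/(4d)$, but that has already been discharged in the proof of \Cref{lemma:pathCountingQueryBaseCase0Space} and is carried over unchanged.
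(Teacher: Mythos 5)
Your proposal is correct and matches the paper exactly: the paper derives this lemma with a one-line remark combining \Cref{lemma:pathCountingQueryBaseCase0Time} (constant query time) and \Cref{lemma:pathCountingQueryBaseCase0Space} (the $\O(n\lg\lg n)$-bit space bound under $\eps \in (0,\frac{1}{4d})$), which is precisely the assembly you describe.
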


Finally, instantiating \Cref{section:masterLemmaGeneral} with $g(x) \equiv 1$ and $\oplus$ as the regular addition
operation $\mathtt{+}$ in $\mathbb{R},$
we can apply \Cref{lem:reductionGeneral} to \Cref{lemma:pathCountingBaseCase0} iteratively and obtain the following result:

\begin{theorem}\label{theorem:pathCounting}
  Let $d \geq 1$ be a constant integer.
  A tree $T$ on $n$ nodes, in which each node is assigned a $d$-dimensional weight vector,  
  can be represented in $\O(n(\lg n/\lg\lg n)^{d-1})$ words such that a 
  path counting query can be answered in $\O((\lg{n}/\lg\lg n)^{d})$ time.
\end{theorem}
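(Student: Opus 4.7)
The plan is to instantiate the semigroup framework with counting and then lift the $\dimcl{0}{d}{\eps}$-dimensional base structure of \Cref{lemma:pathCountingBaseCase0} up to a fully $d$-dimensional structure by applying \Cref{lem:reductionGeneral} exactly $d$ times, each time promoting one coordinate from the small-universe $[\ceil{\lg^\eps n}]$ range to the full rank space $[n]$.

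First I would fix the semigroup $(G,\oplus) = (\mathbb{N},+)$ and set $g(x) \equiv 1$ for every node $x \in T$. With this choice, the quantity $\sum_{z \in P_{x,y},\ \pnt{w}(z) \in Q} g(z)$ computed by any $d'$-dimensional semigroup path sum query is exactly the number of nodes of $P_{x,y}$ whose weight vectors fall in $Q$, which is the path counting query we want. The constant $\eps \in (0,1)$ used in the reductions should be chosen small enough so that $\eps < 1/(4d)$, which is required by \Cref{lemma:pathCountingBaseCase0}.

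Next I would invoke \Cref{lemma:pathCountingBaseCase0} as the base case. This yields a $\dimcl{0}{d}{\eps}$-dimensional semigroup path sum data structure $G^{(0)}$ of size $s_0(n) = \O(n\lg\lg n)$ bits $= \O(n\lg\lg n / \lg n)$ words, answering queries in time $t_0(n) = \O(1)$. I then apply \Cref{lem:reductionGeneral} iteratively for $d' = 1,2,\ldots,d$. This gives the recurrences
\[
s_{d'}(n) = \O\!\left(s_{d'-1}(n)\cdot \tfrac{\lg n}{\lg\lg n} + n\right), \qquad
t_{d'}(n) = \O\!\left(t_{d'-1}(n)\cdot \tfrac{\lg n}{\lg\lg n}\right),
\]
where the per-query $\O(1)$ setup overhead promised by the lemma is absorbed into the multiplicative factor for $t$. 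Unrolling, the first reduction gives $s_1(n) = \O(n)$ words, and thereafter each additional reduction contributes another factor $\lg n/\lg\lg n$ in space that dominates the additive $n$, so $s_d(n) = \O(n(\lg n / \lg\lg n)^{d-1})$ words. Similarly $t_d(n) = \O((\lg n / \lg\lg n)^d)$.

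The only subtlety worth highlighting is that the base structure is measured in bits rather than words, so one must check that the initial reduction still fits the recurrence; since $\O(n\lg\lg n)$ bits is $\smallO(n)$ words, the additive $+n$ in \Cref{lem:reductionGeneral} absorbs it and makes $s_1(n) = \O(n)$ clean. Aside from this bookkeeping, the argument is a routine induction on $d'$, since \Cref{lem:reductionGeneral} already packages the reduction of a $\dimcl{d'}{d}{\eps}$-dimensional semigroup path sum query to $\O(\lg n/\lg\lg n)$ queries on $\dimcl{d'-1}{d}{\eps}$-dimensional structures built over auxiliary trees of size $n+1$, and guarantees that identifying each sub-query costs $\O(1)$ time. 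The final $d$-th application promotes the last small-universe coordinate to a full $[n]$-valued coordinate, producing the desired fully $d$-dimensional data structure with the claimed bounds.
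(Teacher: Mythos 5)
Your proposal is correct and follows essentially the same route as the paper: instantiate the semigroup path sum framework with $g(x)\equiv 1$ and $\oplus$ as ordinary addition, then apply \Cref{lem:reductionGeneral} iteratively $d$ times starting from the $\dimcl{0}{d}{\eps}$-dimensional base structure of \Cref{lemma:pathCountingBaseCase0}. The paper states this in one sentence; your unrolled recurrences for $s_{d'}$ and $t_{d'}$, the choice $\eps<1/(4d)$, and the bits-versus-words bookkeeping for the base case are exactly the details it leaves implicit, and they check out (the multiplicative time recurrence is safe here precisely because counting has no output-sensitive $+k$ term).
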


\section{Path Reporting}\label{appendix:pathReporting}
We use the following result of Chan et al.~\cite{DBLP:journals/algorithmica/ChanHMZ17}:
\begin{lemma}[\cite{DBLP:journals/algorithmica/ChanHMZ17}]\label{lemma:pathReportingBaseCase}
    An ordinal tree on $n$ nodes whose weights are drawn from $[n]$
    can be represented using $\O(n\lg^{\eps}{n})$ words of space,
    such that path reporting queries can be supported in 
    $\O(\lg\lg{n}+k)$ time, where $k$ is the number of reported nodes and $\eps$
    is an arbitrary positive constant.
\end{lemma}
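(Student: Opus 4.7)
The plan is to lift the range-tree-plus-ball-inheritance pipeline (used elsewhere in this paper, e.g.\ in \Cref{lemma:rangeSuccessorBaseCase01}) to a reporting structure whose additive cost collapses to $\O(\lg\lg n)$. The first ingredient I would deploy is a conceptual range tree $R$ over the weight universe $[n]$ with branching factor $f = \ceil{\lg^{\eps/2}{n}}$, yielding $h = \O(\lg n / \lg\lg n)$ levels, represented by hierarchical tree extraction as in \Cref{section:preliminaries}. Each $T_l$ is stored via \Cref{lemma:smallAlphabetTrees}, contributing $n\lg n + \smallO(n\lg n)$ bits in aggregate, i.e.\ $\O(n)$ words.

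Second, I would augment $R$ with the ball-inheritance data structure of \Cref{lemma:conceptualTreeBallInheritance}, choosing case (c): this costs $\O(n\lg n \cdot \lg^{\eps}{n})$ bits, i.e.\ $\O(n\lg^{\eps}{n})$ words, and delivers $\O(1)$-time mapping from any $x_l \in T_l$ back to its source in $T$. Constant-time source retrieval is what drives the $\O(k)$ reporting term. Third, given a query $(P_{x,y}, [a,b])$, I would decompose $P_{x,y} = A_{x,z} \cup \{z\} \cup A_{y,z}$ with $z = \mathtt{LCA}(T,x,y)$, exactly as in the proof of \Cref{lem:reductionBinaryCase}, and handle each sub-path separately. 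For $A_{x,z}$ the standard range-tree descent identifies $\O(\lg n / \lg\lg n)$ canonical nodes of $R$ whose ranges lie inside $[a,b]$; for each such canonical $v$ at level $l+1$ I would track $x_v, z_v \in T_{l+1}$ via \Cref{lemma:rangetreemapping}, traverse $A_{x_v, z_v}$ in $T_{l+1}$, and emit each encountered node's $T$-source through ball-inheritance in $\O(1)$ per node.

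The main obstacle is shrinking the non-output portion of the query time from the natural $\O(\lg n / \lg\lg n)$ bound to $\O(\lg\lg n)$. I would attack this on two fronts. First, to locate the root-to-leaf paths in $R$ for the endpoints $a$ and $b$ without paying level-by-level descent cost, I would equip $R$'s leaf range with a $y$-fast trie on $[n]$ (using $\O(n)$ words) and a \Cref{lemma:markedView}-style shortcut so that the canonical nodes can be emitted in $\O(\lg\lg n)$ time overall. Second, to avoid wasted descents into canonical subtrees whose contribution is empty, I would attach to each $v$ a constant-sized summary allowing an $\O(1)$ emptiness test on $A_{x_v, z_v}$; only non-empty canonicals are explored, so the total non-output work amortizes against the reported set. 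Summing everything, the structure uses $\O(n\lg^{\eps}{n})$ words and answers a query in $\O(\lg\lg n + k)$ time, establishing the claim.
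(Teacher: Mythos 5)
This statement is not proven in the paper at all: it is imported verbatim from Chan et al.~\cite{DBLP:journals/algorithmica/ChanHMZ17}, so there is no in-paper proof to compare against. Judged on its own merits, your reconstruction has a genuine gap at exactly the step you identify as ``the main obstacle.'' A range-tree descent with branching factor $\lceil\lg^{\eps/2}n\rceil$ produces $\Theta(\lg n/\lg\lg n)$ levels, and at each level you must at least \emph{touch} the fringe nodes and test the canonical children; even with an $\O(1)$ emptiness test per canonical node, the total non-output work is $\Omega(\lg n/\lg\lg n)$, not $\O(\lg\lg n)$, because the tests must be executed on empty canonicals too --- there is nothing to amortize them against when $k$ is small. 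The $y$-fast trie does not help: the endpoints $a,b$ are already in rank space, so locating their leaves is not the bottleneck; the bottleneck is the level-by-level traversal itself. Your appeal to a \Cref{lemma:markedView}-style shortcut also does not close the gap, since that lemma's query time is $\O(\lg^{\eps'}n)$ (dominated by ball-inheritance), and even an $\O(\lg\lg n)$-per-canonical-node variant would sum to $\O(\lg n)$ over all canonical nodes. Likewise, \Cref{lemma:rangetreemapping} only maps one level down at a time, so maintaining $x_v,z_v$ at every canonical node already forces a full descent.

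The missing idea in the actual construction of~\cite{DBLP:journals/algorithmica/ChanHMZ17} (in the spirit of Chan--Larsen--P{\u a}tra{\c s}cu) is to avoid per-level work entirely: one jumps directly to the single splitting node of the range tree where the ranges of $a$ and $b$ diverge, and reduces the two ``staircase'' halves of the query to a constant number of dominance-type (three-sided) reporting queries, each answered in $\O(1+k)$ time by a dedicated structure stored at that node --- much as this paper's own \Cref{lemma:lemma3SidedReporting} answers path dominance reporting in $\O(1+k)$ time. The additive $\O(\lg\lg n)$ then comes from a single predecessor-type search, and the $\O(k)$ term from constant-time ball inheritance, which is the one ingredient your proposal does handle correctly. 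Without the reduction to $\O(1)$ three-sided queries at one node, your scheme yields $\O(\lg n/\lg\lg n + k)$ at best.
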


\Cref{lemma:pathReportingBaseCase} implies the following
\begin{lemma}\label{lemma:d1eReporting}
    Let $d \geq 1$ and $0 < \eps < \frac{1}{2(d-1)}$ be constants,
    and let $T$ be an ordinal tree on $n$ nodes,
    in which each node is assigned a $\dimcl{1}{d}{\eps}$-dimensional
    weight vector. Then, $T$ can be represented in
	$\O(n\lg^{\eps'}{n})$ words of space,
    for any $\eps' \in (2(d-1)\eps,1),$ so that
	a path reporting query can be answered in $\O(\lg\lg{n}+k)$ time,
    where $k$ is the number of reported nodes.
\end{lemma}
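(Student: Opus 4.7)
The plan is to index, at preprocessing time, all the polylogarithmically-many possible query ranges over the last $d-1$ (small) dimensions, and, for each such range, to build a separate scalar-weight path reporting structure via \Cref{lemma:pathReportingBaseCase}. Since each of the last $d-1$ weights lies in $[\ceil{\lg^{\eps}{n}}]$, the number of axis-aligned ranges $\pnt{G} = \prod_{i=2}^{d}[a_i,b_i]$ on these coordinates is $\O(\lg^{2(d-1)\eps}{n})$. For each such $\pnt{G}$, I would define $T_{\pnt{G}}$ as the tree extraction of $T$ containing exactly those nodes whose last $d-1$ weights lie in $\pnt{G}$, each retaining its $1$st weight (reduced to the rank space of $T_{\pnt{G}}$). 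On $T_{\pnt{G}}$ I would construct the structure of \Cref{lemma:pathReportingBaseCase} with internal parameter $\eps_0 := \eps' - 2(d-1)\eps$, which is strictly positive by the hypothesis $\eps' > 2(d-1)\eps$. To bridge $T$ and $T_{\pnt{G}}$, I would also store, per $\pnt{G}$, a $0/1$-labeled copy of $T$ via \Cref{lemma:smallAlphabetTrees} whose $1$-nodes are exactly those extracted into $T_{\pnt{G}}$, together with a direct-address lookup table mapping each $\pnt{G}$ to the associated structures.

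Queries proceed along the lines of the proof of \Cref{lem:reductionBinaryCase}: given $P_{x,y}$ and $Q = [q_1,q_1']\times\pnt{G}$, I would compute $z = \mathtt{LCA}(T,x,y)$, handle $z$ itself directly, and reduce to the two ancestor-paths $A_{x,z}$ and $A_{y,z}$. For $A_{x,z}$, I would use $\mathtt{level\_anc}_1$ and $\mathtt{pre\_rank}_1$ on the $0/1$-labeled tree for $\pnt{G}$ to locate the $T_{\pnt{G}}$-\astralbody{}s $x'$ and $z'$ of $x$ and $z$, then query the \Cref{lemma:pathReportingBaseCase}-structure of $T_{\pnt{G}}$ with scalar range $[q_1,q_1']$ and the appropriately adjusted subpath of $P_{x',z'}$ (discarding $z'$ if $z$ itself is extracted). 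Each reported node in $T_{\pnt{G}}$ is mapped back to $T$ via $\mathtt{pre\_select}_1$ on the same labeled tree, and $A_{y,z}$ is handled symmetrically. The entire query thus reduces to a constant number of $\O(1)$-time navigation operations plus one invocation of \Cref{lemma:pathReportingBaseCase}, giving the claimed $\O(\lg\lg{n}+k)$ time.

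The crux of the argument is the space accounting. Because each node's small-dimensional vector is contained in $\O(\lg^{2\eps}{n})$ intervals per dimension, hence in $\O(\lg^{2(d-1)\eps}{n})$ of the ranges $\pnt{G}$, one has $\sum_{\pnt{G}} |T_{\pnt{G}}| = \O(n\lg^{2(d-1)\eps}{n})$. Summing the per-range cost of \Cref{lemma:pathReportingBaseCase} with parameter $\eps_0$ then yields total main-structure space $\O(n\lg^{2(d-1)\eps+\eps_0}{n}) = \O(n\lg^{\eps'}{n})$ words, meeting the target. The $0/1$-labeled navigation trees contribute $\O(n\lg^{2(d-1)\eps}{n})$ bits, which is $\smallO(n)$ words since $2(d-1)\eps < \eps' < 1$, and the direct-address lookup table is negligible.

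The main obstacle is precisely this balancing act: the per-range space bound of \Cref{lemma:pathReportingBaseCase} must absorb the $\lg^{2(d-1)\eps}{n}$-factor blowup from per-node replication across the $\pnt{G}$-family. The hypothesis $\eps' > 2(d-1)\eps$ supplies exactly the slack required to set $\eps_0 = \eps'-2(d-1)\eps > 0$; any tighter coupling between $\eps$ and $\eps'$ would force $\eps_0 \le 0$ and invalidate the instantiation of \Cref{lemma:pathReportingBaseCase}.
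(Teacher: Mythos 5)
Your proposal is correct and follows essentially the same route as the paper's proof: one base structure of \Cref{lemma:pathReportingBaseCase} per possible range over the last $d-1$ (small) coordinates, a $0/1$-labeled copy of $T$ per range to translate nodes between $T$ and the extraction, and the slack $\eps'-2(d-1)\eps>0$ absorbed into the base structure's internal parameter. Your space accounting via $\sum_{G}|T_{G}|=\O(n\lg^{2(d-1)\eps}n)$ is in fact slightly sharper than the paper's crude per-range bound of $\O(n\lg^{\delta}n)$, but it lands on the same total.
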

\begin{proof}
    In brief, we build a path reporting data structure from \Cref{lemma:pathReportingBaseCase}
    for each possible orthogonal range over the last $(d-1)$ dimensions.
    When presented with a query, we directly proceed to the appropriately-tagged (by the last
    $(d-1)$ weights) reporting structure, and launch the query therein. 
    A detailed exposition follows.

    We assume the encoding of $T$ as in \Cref{lemma:smallAlphabetTrees}; the space
    incurred is only $\O(n)$ bits, i.e. negligible with respect
    to the terms derived below.

    For any $\dimcl{0}{d-1}{\eps}$-dimensional orthogonal range $G$ 
    we build an explicit scalarly-weighted tree $E_{G}$ as the
    extraction of the node set $\{v\,|\,v \in T \text{ and } \pnt{w}_{2,d}(v) \in G\}$ from $T$.
    The weight of a node in $E_{G}$ is the $1$st weight of its $T$-source.
    If $E_{G}$ has a dummy root, then its weight is $-\infty.$ 

    $E_{G}$ is represented using \Cref{lemma:pathReportingBaseCase},
    in space that is at most $\O(n\lg^{\delta}{n})$ words, for arbitrarily small positive $\delta.$
    In order to adjust the nodes between $T$ and $E_{G},$ a $0/1$-labeled
    tree $T_{G}$ is maintained. It has the same structure as $T,$
    and a node in it has label $1$ {\textit{iff}} it has been extracted into $E_{G};$
    its label is $0,$ otherwise.
    The tree $T_{G}$ is represented using \Cref{lemma:smallAlphabetTrees},
    in $3n+\smallO(n)$ bits.

    Accounting for all possible ranges $G,$ we therefore have $\O(n\lg^{\delta+2(d-1)\eps}{n})$ words of space, in total.
    Setting $\delta$ to be sufficiently small and assigning $(\delta+2(d-1)\eps)$ to $\eps'$
    justifies the space claimed. All the $T_{G}$-structures collectively
    occupy $\O(n\lg^{2(d-1)\eps}{n})$ bits of space, which is $\smallO(n)$ words.

    Let $P_{x,y}$ and $Q$ be, respectively, the path and the orthogonal range given as the query parameters.
    Using the notation of the \Cref{lem:reductionBinaryCase}, and for the same reasons
    as given therein, we concern ourselves only with answering the query over the path $A_{x,z},$
    where $z = \mathtt{LCA}(T,x,y)$ ($\mathtt{LCA}$ is available through $T$'s encoding).
    To answer the query, we first locate the relevant tree $E_{Q'},$ where $Q' = Q_{2,d},$
    and launch path reporting query in $E_{Q'}$, having adjusted the nodes $x$ and $z$
    accordingly to their $T_{Q'}$-{\astralbody}s
    as $x = \mathtt{pre\_rank}_{1}(T_{Q'},\mathtt{level\_anc}_{1}(T_{Q'},x,1))$,
    and analogously for $z.$
    Finally, the one-dimensional query in $E_{Q'}$ executes in $\O(\lg\lg{n}+k)$ time, 
    by \Cref{lemma:pathReportingBaseCase}, thereby establishing the claimed time bound.
    For each returned node $x$, we recover its original identifier
    with $\mathtt{pre\_select}_1(T_{Q'},x)$.
\end{proof}

Instantiating \Cref{section:masterLemmaGeneral} with $g(x) = \{x\}$ and the semigroup sum operator $\oplus$ as the set-theoretic
union operator $\cup$, 
\Cref{lem:reductionGeneral} and \Cref{lemma:d1eReporting} combined imply the following
\begin{theorem}\label{theorem:pathReporting}
  Let $d \geq 2$ be a constant integer.
  A tree $T$ on $n$ nodes, in which each node is assigned a $d$-dimensional weight vector,  
  can be represented in $\O(n\lg^{d-1+\eps}{n})$ 
  words such that a path reporting query can be answered in $\O((\lg^{d-1}{n})/(\lg\lg{n})^{d-2}+k)$
  time where $k$ is the number of the nodes reported, for an arbitrarily small positive constant $\eps$.
\end{theorem}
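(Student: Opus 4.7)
The plan is to take the $\dimcl{1}{d}{\eps}$-dimensional reporting structure of \Cref{lemma:d1eReporting} as the base case and apply \Cref{lem:reductionGeneral} exactly $d-1$ times, each time ``upgrading'' one of the last $d-1$ weights from a small-universe coordinate into a full $[n]$-coordinate. This is precisely the framework-instantiation suggested by the hint preceding the theorem: take $g(x) = \{x\}$ and let $\oplus$ be the set-theoretic union $\cup$, so that the semigroup sum computed by \Cref{lem:reductionGeneral} is exactly the set of nodes to be reported. Formally, I would define a sequence of structures $G^{(1)}, G^{(2)}, \ldots, G^{(d)}$, where $G^{(1)}$ is the $\dimcl{1}{d}{\eps}$-dimensional structure from \Cref{lemma:d1eReporting} (with its parameter $\eps'$ chosen small, say $\eps' < \eps$), and where $G^{(j)}$ is obtained from $G^{(j-1)}$ by one invocation of \Cref{lem:reductionGeneral}.

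For space, by \Cref{lemma:d1eReporting} the base structure occupies $\O(n\lg^{\eps'}{n})$ words for any $\eps' \in (2(d-1)\eps,1)$. Each application of \Cref{lem:reductionGeneral} multiplies the space by a factor of $\O(\lg n/\lg\lg n)$ and adds an additive $\O(n)$ term. Iterating $d-1$ times gives
\[
    \O\!\left( n\lg^{\eps'}{n}\cdot\left(\tfrac{\lg n}{\lg\lg n}\right)^{d-1} + n\right)
    \;=\; \O\!\left( \tfrac{n\lg^{d-1+\eps'}{n}}{(\lg\lg n)^{d-1}} \right),
\]
which is $\O(n\lg^{d-1+\eps}{n})$ after absorbing the $\eps'$ slack into $\eps$ (here I would set the base-case constant to be strictly smaller than the target $\eps$, e.g.\ $\eps' = \eps/2$ and ensure the earlier $\eps$ inside \Cref{lemma:d1eReporting} is adjusted so the hypothesis $\eps' \in (2(d-1)\eps,1)$ is satisfied).

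For the query time, each reduction step dispatches $\O(\lg n/\lg\lg n)$ sub-queries at the next lower level, all of which can be set up in $\O(1)$ time per sub-query by \Cref{lem:reductionGeneral}. Unrolling the recursion, answering one $d$-dimensional query triggers at most $\bigl(\lg n/\lg\lg n\bigr)^{d-1}$ base queries, each taking $\O(\lg\lg n + k_i)$ time by \Cref{lemma:d1eReporting}. The ``overhead'' part sums to $\O\bigl((\lg n/\lg\lg n)^{d-1}\cdot\lg\lg n\bigr) = \O\bigl(\lg^{d-1}{n}/(\lg\lg n)^{d-2}\bigr)$, and the overhead of determining which base queries to issue is absorbed into the same bound since the setup is $\O(1)$ per sub-query.

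The main obstacle, and the reason the trivial multiplicative argument (alluded to in the footnote of \Cref{lem:reductionBinaryCase}) is too weak, lies in bounding the sum $\sum_i k_i$ of the output sizes of the base queries: naively one could fear $\sum_i k_i = \Theta(k\cdot(\lg n/\lg\lg n)^{d-1})$. The key observation to rule this out is that, at each level of every range tree used inside the iterated construction, the canonical subranges selected to cover the query interval are pairwise disjoint, and each extraction $\mathscr{T}_v$ contains exactly the nodes of $T$ whose relevant coordinate lies in $v$'s range. Hence for a fixed answer node $z$, there is exactly one leaf-to-root chain of canonical nodes $(v^{(d)}, v^{(d-1)}, \ldots, v^{(1)})$ across all $d-1$ range trees whose ranges simultaneously contain $w_2(z), w_3(z), \ldots, w_d(z)$ and are each contained in the query range along that coordinate. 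Consequently, $z$ is reported in precisely one base query, and therefore $\sum_i k_i = k$, which yields the claimed $+k$ additive term and completes the bound $\O((\lg^{d-1}{n})/(\lg\lg{n})^{d-2}+k)$.
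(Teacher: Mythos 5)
Your proposal is correct and follows exactly the route the paper takes: instantiate the semigroup framework with $g(x)=\{x\}$ and $\oplus=\cup$, use \Cref{lemma:d1eReporting} as the $\dimcl{1}{d}{\eps}$-dimensional base case, and apply \Cref{lem:reductionGeneral} a total of $d-1$ times, with the space and overhead terms telescoping as you compute. The paper states this in one sentence without the accounting; your added observations — in particular that the canonical ranges at each level are disjoint so each answer node is reported exactly once and $\sum_i k_i = k$ — are the correct justification for why the additive $+k$ survives the iteration, which is precisely the issue flagged in the footnote to \Cref{lem:reductionBinaryCase}.
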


\bibliography{bibliography}
\end{document}